\documentclass[11pt]{article}

\usepackage{graphicx}
\usepackage{amsfonts}
\usepackage{amsmath}
\usepackage{amsthm}
\usepackage{amssymb}
\usepackage{bm}                     % Define vectors
\usepackage{color}
\usepackage{indentfirst}
\usepackage{multirow}
\usepackage{longtable}
\usepackage{bbm}
\usepackage[all]{xy}
\usepackage{pdflscape}

\usepackage[paper=a4paper,dvips,top=3cm,left=2.4cm,right=2.4cm,
    foot=1cm,bottom=3cm]{geometry}

\setlength{\parindent}{2em}
\allowdisplaybreaks[4]

\begin{document}
%\begin{landscape}

\title{A Polynomially Irreducible Functional Basis of Elasticity Tensors}
\author{Zhenyu Ming\footnote{Department of Mathematical Sciences, Tsinghua University, Beijing 100084, China ({\tt mingz17@mails.tsinghua.edu.cn}).}
\and Yannan Chen\footnote{%
    School of Mathematical Sciences, South China Normal University, Guangzhou 510631, China ({\tt ynchen@scnu.edu.cn}).
    This author's work was supported by the National Natural Science Foundation of China (Grant No. 11571178, 11771405).}
\and Liqun Qi\footnote{%
    Department of Applied Mathematics, The Hong Kong Polytechnic University,
    Hung Hom, Kowloon, Hong Kong ({\tt maqilq@polyu.edu.hk}).
    This author's work was partially supported by the Hong Kong Research Grant Council
    (Grant No. PolyU 15302114, 15300715 and 15301716).}
\and Liping Zhang\footnote{Department of Mathematical Sciences, Tsinghua University, Beijing 100084, China ({\tt lipingzhang@tsinghua.edu.cn}).
This author's work was supported by the National Natural Science Foundation of China (Grant No. 11771244).}
}

\date{}
\maketitle

\begin{abstract}

Olive, Kolev and Auffray (2017) proposed a minimal integrity basis of three-dimensional (3D) elasticity tensors, which consists of 297 isotropic invariants, and is also a functional basis.
In this paper, we construct a new functional basis to separate the whole space of 3D elasticity tensors, using Smith's approach. To achieve this goal, we first construct 22 intermediate tensors consisting of 11 second-order symmetrical tensors and 11 scalars via the irreducible decomposition of 3D elasticity tensors. Then, based on Zheng's results, a functional basis consisting of 429 isotropic invariants is obtained by these intermediate tensors. Subsequently, the cardinality is lowered from 429 to 251 after eliminating all the invariants that are zeros or polynomials of the others. Finally, we build a polynomially irreducible functional basis of 3D elasticity tensors, which contains a minimal number of elements compared to existing literature, even if it might not be a minimal functional basis.

  \textbf{Key words.} functional basis, isotropic invariant, elasticity tensor.
\end{abstract}

\newtheorem{Theorem}{Theorem}[section]
\newtheorem{Definition}[Theorem]{Definition}
\newtheorem{Lemma}[Theorem]{Lemma}
\newtheorem{Corollary}[Theorem]{Corollary}
\newtheorem{Proposition}[Theorem]{Proposition}
\newtheorem{Conjecture}[Theorem]{Conjecture}
\newtheorem{Question}[Theorem]{Question}
\newtheorem{Remark}[Theorem]{Remark}

% LaTeX definitions
\newcommand{\REAL}{\mathbb{R}}
\newcommand{\COMP}{\mathbb{C}}
\newcommand{\vt}[1]{{\bf #1}}%{\bm{#1}}
\newcommand{\aaa}{{\vt{a}}}
\newcommand{\ddd}{{\vt{d}}}
\newcommand{\uu}{{\vt{u}}}
\newcommand{\vv}{{\vt{v}}}
\newcommand{\x}{{\vt{x}}}
\newcommand{\y}{{\vt{y}}}
\newcommand{\z}{{\vt{z}}}
\newcommand{\e}{{\vt{e}}}
\newcommand{\A}{{\bf A}}
\newcommand{\B}{{\bf B}}
\newcommand{\D}{{\bf D}}
\newcommand{\C}{{\bf C}}
\newcommand{\E}{{\bf E}}
\newcommand{\F}{{\bf F}}
\newcommand{\HH}{{\bf H}}
\newcommand{\M}{{\bf M}}
\newcommand{\G}{{\bf G}}
\newcommand{\K}{{\bf K}}
\newcommand{\N}{{\bf N}}
\newcommand{\T}{{\bf T}}
\newcommand{\V}{{\bf V}}
\newcommand{\W}{{\bf W}}
\newcommand{\Q}{{\bf Q}}
\newcommand{\PP}{{\bf P}}
\newcommand{\SH}[1]{{\mathbb{H}^{#1}}}
\newcommand{\OO}{{\mathrm{O}(3)}}
\newcommand{\SO}{{\mathrm{SO}(3)}}
\newcommand{\LCT}{{\bm \epsilon}}
\newcommand{\COV}{\mathbf{Cov}}
\newcommand{\INV}{\mathbf{Inv}}
\newcommand{\BS}[1]{{\mathcal{S}_{#1}}}
\newcommand{\uim}{\mathbf{\Large i}}
\newcommand{\tr}{\mathrm{tr}}

\section{Introduction}
In solid mechanics, the material will deform when external stress is applied on it. The deformation of each material point relates to two second-order symmetric tensors, the strain tensor ${\pmb\varepsilon}$ and the local Cauthy tensor ${\pmb\sigma}$. The linear relationship between ${\pmb\varepsilon}$ and ${\pmb\sigma}$ can be modelled as
$${\pmb\sigma}=\E:{\pmb\varepsilon}.$$
Here, $\E$ is a fourth order elasticity tensor within 21 independent components, satisfying the index form
$$E_{ijkl}=E_{ijlk}=E_{klij}.$$
The theory of elasticity is both an important branch of solid mechanics and the basis of material mechanics, structural mechanics, plastic mechanics and some interdisciplinary subjects. In practice, elastic anisotropy is a very common property of materials, and that of homogeneous materials usually varies with the material direction \cite{Le-81}. Therefore, we study the tensor function representation theory, which is an efficient tool to  describe general consistent invariant forms of the nonlinear constitutive equations and  determine the number and the type of scalar variables involved. In recent years, fruitful works about this topic have been developed \cite{SB-97,CHQZ-17,Ol-14,Ol-17,CZ-18,CZM-18,M-19,Liu-18}.

In the field of mechanical, an integrity basis of a certain type of tensors $\T$ means generating the algebra of invariant polynomials, and a finite set of polynomial isotropic invariants separating the $\rm{O_3}$(three-dimensional orthogonal group)-orbits of $\T$ is called a functional basis. In this paper, the form of a functional basis is limited to polynomials, or they could
be approximated with sufficient accuracy by tensor polynomials
of an arbitrarily high degree \cite{Zh-93}. Therefore, we should call it a polynomial functional basis when necessary. According to the invariant theory, for any finite-dimensional representation $V$, the algebra of polynomial invariants is finitely generated. For this reason, a natural question is ``How to determine a group of invariants to separate tensor space with minimal components?" The answer is clear in terms of vectors (first-order tensor) and second-order (skew-)symmetric tensors. Representations in both complete and irreducible forms were well established by Wang \cite{W-701,W-702,W-7071}, Smith \cite{Sm-71}, Boehler \cite{Boe-77}, and simplified by Zheng \cite{Zh-94}. However, to our best knowledge, it is generally an extremely difficult task for higher order ($\ge3$) tensors. Based on this fact, it is better to first find a series of intermediate tensors which can separate $\rm{O}_3$-orbits with orders no greater than two, instead of searching for the invariants directly from the considered high order tensors. This is the motivation of our paper. Moreover, since the order of elasticity tensor is four (even), its $\rm{O}_3$-orbit and $\rm{SO}_3$-orbit are equivalent.

With regard to the elasticity tensors, Vianello \cite{V-97} proposed an integrity basis with 5 isotropic invariants for plane elasticity tensors. In three dimensions, from a group representation viewpoint, Olive, Kolev and Auffray \cite{OKA-17} obtained a minimal integrity basis comprising 297 isotropic invariants by exploiting the link between the $\rm{SO_3}$-action on irreducible tensors (i.e. symmetric and traceless tensor) and the SL(2,$\mathbb{C}$)-action on the space of binary forms, to separate the whole space of elasticity tensors. Very recently, Desmorat, Auffray, Desmorat, Kolev and Olive \cite{DADKO19}  attempted to separate only generic orbits of 3D elasticity tensors through two sets of invariants made of 19 and 21 polynomials respectively (the latter one is more easier to compute).

In this paper, we succeed in finding a polynomial functional basis to separate the whole space of 3D elasticity tensors and lowering the cardinality to 251. It should be noted that there is no evidence indicating that our basis is a minimal functional basis. We first split an elasticity tensor $\E$ into five parts through harmonic decomposition, namely two scalars $\lambda$ and $\mu$, two second-order irreducible tensors $\D^{(1)}$, $\D^{(2)}$, and one fourth-order irreducible tensor $\A$. After a rather cumbersome numerical experiment, we produce 11 second-order symmetric tensors by the components of $\lambda,\ \mu,\ \D^{(1)},\ \D^{(2)}$ and $\A$. In addition with and 11 invariants (scalars) proposed in \cite{CZM-18,SB-97}, we further prove that these 22 intermediate tensors could determine the $\rm{SO}_3$-orbit of $\E$.
For the credibility and the readability of our proof, we may clarify in advance that the methodology used therein is actually quite fundamental, although the procedure seems like a lot of calculation. The main idea comes from Smith's approach \cite{Sm-71}. According to the harmonic decomposition of $\E$, only $\A$'s order is greater than two, so our attempt is to determine the $\rm{SO}_3$-orbit of $\A$ based on these 22 intermediate tensors.

The principle is to specify the magnitude of each component of $\A$ one by one, which, however, seems unlikely to achieve in many cases. For example, supposing $\PP$ is a second-order symmetric tensor (symmetric matrix) in 3D, taking the form of
$$\PP=P_{ij},\ i,j=1,2,3$$
in a certain coordinate system, obviously $\PP$ has 6 independent components
$$P_{11},\ P_{12},\ P_{13},\ P_{22},\ P_{23},\ P_{33},$$
but the functional bases of $\PP$ being $\tr \PP,\ \tr \PP^2,$ and $\tr \PP^3$ are not enough to determine the magnitudes of six components of $\PP$. However, this problem can be solved if we choose a proper orientation of the coordinate axes (by rotation transformation), making $\PP$ a diagonal matrix:
$$\PP=\text{diag}(P_{11}^{'},\ P_{22}^{'},\ P_{33}^{'})$$
with $P_{11}^{'}\ge P_{22}^{'}\ge P_{33}^{'}$.
 In this sense, $P_{11}^{'},\ P_{22}^{'},$ and $ P_{33}^{'}$ can be calculated by $\tr \PP,\ \tr \PP^2,$ and $\tr \PP^3$. This is the main trick of Smith's approach, which is also adopted in our proof.

 Moreover, the basis we construct is also called a polynomially irreducible functional basis since there is no polynomial relation among these 251 invariants. To confirm it, we use {\sl LinearSolve} function in Mathematica to find all possible polynomial relations therein. More specifically,
 we put the considered invariant (suppose its polynomial degree is $d$) at the ¡°right hand side¡± $b$ of a linear equation. Next, we find all the joints in degree $d$ by multiplying some low degree invariants, which, together with all the $d$-degree invariants (except the considered one), constitute the columns of coefficient matrix $A$ at the ``left hand side'' of the linear equation. Suppose that the number of $A$'s columns is $n$. Then, we generate $m(\ge n)$ groups of rational numbers, with each group containing 19 rational numbers, and assign them to $19(=9+5+5)$ independent elements of $\A,\ \D^1,\ \D^2$ respectively. After the assignment, both $A$ and $b$ are determined. Then we implement {\sl LinearSolve} function to solve the linear equation
$$Ax=b,$$
where  $A$ is an $m*n$ matrix and $b$ is an $m-$dimensional vector. The considered invariant is polynomial of the others, if and only if the linear equation has a solution.

This paper is organized as follows. In Section \ref{2.1}, we first review some basic definitions of  both group theory and tensor function representation theory. Then, we present a minimal functional basis of second-order symmetric tensors by Zheng's results in Section \ref{2.2}. Next, 11 second-order symmetric tensors associated with $\E$ are constructed in Section \ref{Recov}, and it is proved in detail that, together with 11 scalars introduced in \cite{CZM-18,SB-97}, these 22 intermediate tensors can determine the $\rm{SO_3}$-orbit of $\E$. Moreover, a set of 429 isotropic invariants are gained based on the intermediate tensors to form a functional basis of $\E$.
In Section \ref{func-basis}, we further find out and verify all the polynomial relations among these 429 invariants and eliminate the redundant ones (details are shown in the supporting material). As a result, there remain 251 invariants in total, which form a polynomially irreducible functional basis of $\E$. Then, we list them in Table \ref{IsoInv} and compare the size in each degree with the results of Olive, Kolev and Auffray \cite{OKA-17}. In the last section, we draw some concluding remarks.

\section{Preliminaries}%\label{Pre}
\subsection{Basic definitions}\label{2.1}
We first recall some classical terminologies in tensor function representation theory. Denote $\T$ as an even order tensor represented by $T_{i_{1}\ldots i_{m}}$ under some orthogonal coordinate. Suppose that $\Q\in\rm{SO_3}$ is an rotation matrix. Then an rotation of $\T$ can be expressed as:
\begin{equation*}
  (\Q,\T)\mapsto {\T}':=\Q\ast{\bf T} \text{ with }T'_{i_{1}\ldots i_{m}}=Q_{i_{1}j_{1}}\ldots Q_{i_{m}j_{m}}T_{j_{1}\ldots j_{m}}.
\end{equation*}
The set
$$\{\Q*\T:\ \Q\in\rm{SO_3}\}$$
is called the $\rm{SO_3}$-orbit of $\T$. A polynomial function $f$ defined on components of $\T$ calls a polynomial isotropic invariant of $\T$, if for any $\Q\in\rm{SO_3}$, $f$ satisfies
$$f(\Q*\T)=f(\T).$$
Moreover, the definitions of integrity basis and functional basis are given as below.
\begin{Definition}%\label{int}
Let $\{f_1,f_2,\ldots,f_n\}$ be a finite set of isotropic invariants of  $\bf T$. If any polynomial isotropic invariant of  $\bf T$ is polynomial in $f_1,f_2,\ldots,f_n$, we call the set $\{f_1,f_2,\ldots,f_n\}$ a set of integrity basis of  $\bf T$. In addition, an integrity basis is minimal
if no proper subset of it is an integrity basis.
\end{Definition}
If we relax invariants from polynomials to scalar-valued functions, we get the definitions of functional basis. On the other hand, to reveal the insight of functional basis that it separate the $\rm{SO_3}$-orbits of tensors, an equivalent definition \cite{Ol-14} of functional basis is given.
\begin{Definition}%\label{func}
Let $\{f_1,f_2,\ldots,f_n\}$ be a finite set of polynomial isotropic invariants of  $\bf T$. If
$$f_i({\bf T_1})=f_i({\bf T_2}),\ \text{for all}\ i=1,2,...,n$$
imply ${\bf T_1}=g*{\bf T_2}$ for some $g\in{\rm SO_3}$, we call $\{f_1,f_2,\ldots,f_n\}$ a set of functional basis of  $\bf T$. In addition, a functional basis is minimal
if no proper subset of it is a functional basis.

\end{Definition}
It should be noted that, the size of minimal integrity basis has been proved to be a fixed number. Nevertheless there is no literature confirming this property could be extended to minimal functional basis yet.

\subsection{Functional basis of second-order symmetric tensors}\label{2.2}
Smith \cite{Sm-71} proposed a constructive approach for determining a functional basis of second-order symmetric tensors $\T_{1},\ldots,\T_{N}$, second-order skew-symmetric tensors $\W_{1},\ldots,\W_{M}$ and vectors $\V_{1},\ldots,\V_{P}$. These functional bases were further proved to be minimal by Pennisi and Trovato \cite{PT-87}, and refined by Zheng \cite{Zh-93}. In this paper, we only concentrate on second-order symmetric tensors $\T_{1},\ldots,\T_{N}$, because each intermediate tensor we propose is second-order symmetrical except the scalar ones. Based on Zheng's results, there are 8 different types of isotropic invariants:
\begin{equation}\label{Sec_Ten}
\tr \T_{i},\ \tr \T_{i}^2,\ \tr \T_{i}^3,\ \tr \T_{i}\T_{j},\ \tr \T_{i}^2\T_{j},\ \tr \T_{i}\T_{j}^2,\ \tr \T_{i}^2\T_{j}^2,\ \tr \T_{i}\T_{j}\T_{k},\quad 1\le i<j<k\le N.
\end{equation}

\section{Recovery of an elasticity tensor}\label{Recov}
Let $\SH{m}$ be the space of $m$th-order symmetric and traceless tensors. A symmetric and traceless tensor is also called an irreducible for simplicity. Here, an $m$th-order $n$-dimensional traceless tensor means that
\begin{equation*}
  T_{iii_3\ldots i_m}=0, ~~\forall i_3,\ldots,i_m\in\{1,\cdots,n\}.
\end{equation*}
Now, we factorize the space of elasticity tensors $\mathbb{E}$ into five irreducible subspaces \cite{Zou-01}:
\begin{equation*}
  \mathbb{E} ~~\to~~ \mathbb{R}^1\oplus\mathbb{R}^1\oplus\SH{2}\oplus\SH{2}\oplus\SH{4}.
\end{equation*}
In a more specific form, we have
\begin{equation}\label{decom}
\begin{aligned}
E_{ijkl}=&\{\lambda\delta_{ij}\delta_{kl}\}+\{\mu(\delta_{ik}\delta_{lj}+\delta_{il}\delta_{kj})\}\\
&+\{\delta_{ij}D^{(1)}_{kl}+\delta_{kl}D^{(1)}_{ij}\}+\{\frac{1}{2}(\delta_{ki}D^{(2)}_{jl}+\delta_{kj}D^{(2)}_{il}+\delta_{li}D^{(2)}_{jk}+\delta_{lj}D^{(2)}_{ik})\}+\{A_{ijkl}\}.
\end{aligned}
\end{equation}
It is known that the dimensions of $\SH{2}$ and $\SH{4}$ are 5 and 9 respectively. Hence an elasticity tensor has $21(=1+1+5+5+9)$ independent elements in total. Results in \eqref{Sec_Ten} implies we are able to determine the $\rm{SO}_3$-orbit of a group of second-order symmetric tensors. Hence, the key point of recovering an elasticity tensor is to determine the $\rm{SO}_3$-orbit of the fourth-order irreducible tensor $\A$ via a group of second-order symmetric tensors. As we have mentioned, a valuable thought in Smith's method is to choose a proper orientation of the coordinate axes such that some tensors have better structure. For instance, any second-order symmetric tensor is diagonalizable under some rotation transformation, we could thus set a proper coordinate to make that tensor  diagonal for the simplicity of the discussion in some cases.

In the following part, we mimic the proof in \cite{Sm-71} for the recovery of elasticity tensor. (To help referee(s) check the result, we will supply our Mathematica code.)
The work starts from the decomposition form of $\E$. According to the decomposition \eqref{decom}, we construct 11 second-order symmetric tensors as:
\begin{equation}\label{INTE}
\left\{\begin{aligned}
  &D^{(1)}_{ij},\qquad D^{(2)}_{ij},\qquad B_{ij}:=A_{ik\ell m}A_{jk\ell m}, \qquad
  C_{ij}:=A_{ijk\ell}B_{k\ell}, \qquad
  D_{ij}:=A_{ijk\ell}B_{km}B_{\ell m}, \\
  %E_{ij}=A_{ijk\ell}C_{kl}, \\ \qquad
  &F_{ij}:=A_{ijk\ell}D^{(1)}_{k\ell}, \qquad
  %T_{ij}=A_{ipq\ell}B_{jp}D^1_{q\ell}, \qquad
  H_{ij}:=A_{ipq\ell}A_{jpqm}D^{(1)}_{\ell m}, \qquad
  M_{ij}:=A_{ijk\ell}D^{(1)}_{km}D^{(1)}_{\ell m},\\
    &G_{ij}:=A_{ijk\ell}D^{(2)}_{k\ell}, \qquad
  %T_{ij}=A_{ipq\ell}B_{jp}D^1_{q\ell}, \qquad
  K_{ij}:=A_{ipq\ell}A_{jpqm}D^{(2)}_{\ell m}, \qquad
  N_{ij}:=A_{ijk\ell}D^{(2)}_{km}D^{(2)}_{\ell m}.
  \end{aligned}\right.
\end{equation}
From the definitions, it can be seen that $\C,\ \D,\ \F,\ \M,\ \G$ and $\N$ are also irreducible tensors.

{\bf First}, if $\D^{(1)}=\D^{(2)}={\bf 0}$, we denote $B^2_{ij}=B_{ik}B_{jk}$ and $P_{ijkl}=A_{ijmn}A_{klmn}$. As a result of \cite{CZM-18,SB-97}, $\A$ could be determined by its functional basis $\{J_2,\ldots,J_{10}\}$, where
\begin{equation*}
\left\{\begin{aligned}
&J_{2}:=A_{ijkl}A_{ijkl},\qquad\quad\quad
J_{3}:=P_{ijkl}A_{ijkl},\qquad \ \
J_{4}:=B_{ij}B_{ij},\\
&J_{5}:=B_{ij}A_{ijkl}B_{kl},\qquad \quad J_{6}:=B_{ij}P_{ijkl}B_{kl},\qquad
J_{7}:=B_{ij}^2A_{ijkl}B_{kl},\\
&J_{8}:=B_{ij}^2P_{ijkl}B_{kl},\qquad\quad J_{9}:=B_{ij}^2A_{ijkl}B_{kl}^2,\quad \quad J_{10}:=B_{ij}^2P_{ijkl}B_{kl}^2.
  \end{aligned}\right.
\end{equation*}

{\bf Second}, we consider the case that $\D^{(1)}$ and $\D^{(2)}$ are not all zero tensors and they are in direct proportion. Without loss of generality, we assume $\D^{(2)}=\rho\D^{(1)}$ for a constant $\rho$. Then, we consider seven tensors defined in \eqref{INTE}: $\D^{(1)},\ \B,\ \C,\ \D,\ \F,\ \HH,\ \M$, and prove that the $\rm{O_3}$-orbit of $\A$ could be determined by these seven tensors.

To make the proof more clearly, we should first give two propositions for the special cases we are to process in the following discussion.
\begin{Proposition}\label{pro1}
Let $\A$ be a fourth-order irreducible tensor, $\alpha,\ \beta\ \text{and}\ \gamma$ be three real numbers, satisfying
\begin{equation*}
\begin{aligned}
 &A_{1112}=A_{1123}=A_{1222}=A_{1223}=A_{2223}=0,\\
 &A_{1122}=\alpha,\ A_{2222}=-2\alpha, A_{1111}=\beta,\  A_{1113}=\gamma.
\end{aligned}
\end{equation*}
Let $\D^{(1)}$ and $\D^{(2)}$ be two diagonal matrixes, $\zeta$ and $\rho$ be two real numbers, satisfying $\D^{(2)}=\rho\D^{(1)}$, $D^{(1)}_{11}=D^{(1)}_{33}=\zeta$ and $D^{(1)}_{22}=-2\zeta$. Suppose that $\B,\ \C,\ \D,\ \F,\ \G,\  \HH,\ \K,\ \M,\ \N$ are defined as \eqref{INTE}. Moreover, we define a fourth-order irreducible tensor $\tilde{\A}$, satisfying
\begin{equation*}
\begin{aligned}
  &\tilde{A}_{1112}=\tilde{A}_{1123}=\tilde{A}_{1222}=\tilde{A}_{1223}=\tilde{A}_{2223}=0,\\
  &\tilde{A}_{1122}=\alpha,\ \tilde{A}_{2222}=-2\alpha,\ \tilde{A}_{1111}=-\frac{3}{4}\alpha,\ \tilde{A}_{1113}=\eta,
\end{aligned}
\end{equation*}
where $$\eta:=\sqrt{\left(\beta+\frac{3}{4}\alpha\right)^2+\gamma^2}.$$
Then, $(\A,\ \D^{(1)},\ \B,\ \C,\ \D,\ \F,\ \G,\  \HH,\ \K,\ \M,\ \N)$ and $(\tilde{\A},\ \D^{(1)},\ \B,\ \C,\ \D,\ \F,\ \G,\  \HH,\ \K,$ $\M,\ \N)$ are in the same $\rm{SO_3}$-orbit.
\end{Proposition}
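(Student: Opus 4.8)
The plan is to exhibit an explicit rotation about the $\e_2$-axis that carries $\A$ to $\tilde{\A}$ while fixing $\D^{(1)}$ and all nine tensors $\B,\C,\D,\F,\G,\HH,\K,\M,\N$. The geometric reason such a rotation is available is that $\D^{(1)}=\mathrm{diag}(\zeta,-2\zeta,\zeta)$ has a repeated eigenvalue $\zeta$ on the $\e_1$--$\e_3$ plane, so its stabiliser in $\SO$ (and that of $\D^{(2)}=\rho\D^{(1)}$) contains the whole one-parameter group $\Q(\theta)$ of rotations about $\e_2$. First I would use the tracelessness of $\A$ together with the hypotheses to pin down every component: the conditions force $A_{1133}=-\beta-\alpha$, $A_{2233}=\alpha$, $A_{3333}=\beta$, $A_{1333}=-\gamma$, and the remaining components carrying an odd number of indices equal to $2$ (as well as $A_{1233}$ and $A_{2333}$) all vanish.

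Next I would analyse how $\A$ decomposes under the subgroup $\Q(\theta)$, grouping components by the number $b$ of indices equal to $2$ (which $\Q(\theta)$ leaves fixed). The blocks $b=1$ and $b=3$ are identically zero; the block $b=4$ is the scalar $A_{2222}=-2\alpha$; and the block $b=2$ is $\alpha$ times the $2$-dimensional identity on the $\e_1$--$\e_3$ plane — all three are invariant under every $\Q(\theta)$. The only nontrivial block is $b=0$, the binary quartic $A(\vt{n})$ with $\vt{n}=\cos\phi\,\e_1+\sin\phi\,\e_3$; a short computation using the values above gives $A(\vt{n})=-\tfrac34\alpha+(\beta+\tfrac34\alpha)\cos4\phi+\gamma\sin4\phi$, so its angular expansion contains only the constant and the $4\phi$-harmonic, the latter of amplitude $\eta=\sqrt{(\beta+\tfrac34\alpha)^2+\gamma^2}$. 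Applying $\Q(\theta)$ shifts $\phi\mapsto\phi-\theta$, rotating the planar vector $(\beta+\tfrac34\alpha,\gamma)$ through the angle $4\theta$ while leaving everything else fixed. Since $\tilde{\A}$ is exactly the configuration whose $b=0$ block is $-\tfrac34\alpha+\eta\sin4\phi$ (i.e.\ $\tilde A_{1111}=-\tfrac34\alpha$, $\tilde A_{1113}=\eta\ge 0$), and this has the same amplitude $\eta$, I can choose $\theta$ so that $4\theta$ aligns the phase, giving $\Q(\theta)\ast\A=\tilde{\A}$.

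It remains to check that this same $\Q(\theta)$ fixes $\D^{(1)}$ and each of $\B,\ldots,\N$; this is the crux of the argument. The clean way is an equivariance/harmonic count: each of these is a second-order symmetric tensor obtained by $\SO$-equivariant index contractions of $\A$ together with $\D^{(1)},\D^{(2)}$. Since $\A$ contributes only harmonics of order $0$ and $4$ and the $\D$'s are $\Q(\theta)$-invariant (order $0$), every contraction produces harmonics whose orders are multiples of $4$; but a second-order symmetric tensor can only carry harmonics of order $0$, $1$, $2$ under rotation about a fixed axis, so only the order-$0$ (invariant) part can survive. Hence $\B,\ldots,\N$ are invariant under \emph{every} $\Q(\theta)$, in particular under the one selected above, and the same holds for $\D^{(1)}$ (whose $\e_1$--$\e_3$ block is $\zeta$ times the identity). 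Because $\Q(\theta)\in\SO$ sends $\A$ to $\tilde{\A}$ and fixes all the other entries, the two $11$-tuples lie in a common $\SO$-orbit. I expect the invariance of the intermediate tensors to be the main obstacle: the harmonic-counting argument has to be made precise, or else replaced by the direct verification $\B(\tilde{\A})=\B(\A),\ldots,\N(\tilde{\A})=\N(\A)$ carried out componentwise as in the accompanying computation, and one must also confirm that the amplitudes match so that the phase equation for $\theta$ is genuinely solvable.
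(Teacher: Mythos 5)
Your proof is correct and its skeleton coincides with the paper's: both exploit the one-parameter rotation subgroup about $\e_2$ (rotations in the 1--3 plane), both show that it fixes $\D^{(1)},\ \D^{(2)}$ and all of $\B,\ldots,\N$ while acting on the pair $(A_{1111}+\tfrac34A_{1122},\ A_{1113})$ as a planar rotation through $4\theta$, and both then solve the resulting phase equation to reach $(-\tfrac34\alpha,\ \eta)$. The difference lies in how the two supporting facts are justified. The paper computes $\B,\ \C,\ \D,\ \F,\ \G,\ \HH,\ \K,\ \M,\ \N$ explicitly, observes that each has the form $\mathrm{diag}(a,b,a)$ and is therefore fixed by any 1--3 rotation, and simply states the $4\theta$ transformation law; you instead derive that law from the angular expansion of the binary quartic $A(\vt{n})=-\tfrac34\alpha+(\beta+\tfrac34\alpha)\cos4\phi+\gamma\sin4\phi$ (which matches the paper's formula), and you replace the nine explicit computations by a single weight-counting argument: $\A$ carries only $\mathrm{SO}(2)$-weights $0,\pm4$ under this subgroup, the $\D$'s only weight $0$, so any equivariant contraction landing in a second-order symmetric tensor --- whose weights are confined to $0,\pm1,\pm2$ --- must be pure weight $0$, hence invariant. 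That argument is sound: the hypotheses together with tracelessness do force all components of $\A$ with an odd number of indices equal to $2$ to vanish and reduce the two-index-$2$ block to $\alpha I_2$, so no weight $\pm1,\pm2,\pm3$ component survives, exactly as your component list ($A_{1133}=-\beta-\alpha$, $A_{2233}=\alpha$, $A_{3333}=\beta$, $A_{1333}=-\gamma$, the rest zero) confirms. Your route is more economical and generalizes more readily than the paper's explicit tabulation, at the cost of having to make the weight decomposition precise --- which you correctly identify as the one point needing care, and which can always be discharged by falling back on the paper's direct componentwise verification.
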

\begin{proof}
With some calculations, we have
\begin{align*}
  & B_{ij}=\mathrm{diag}(2 \left(3 \alpha ^2+3 \alpha  \beta +2 \beta ^2+2 \gamma ^2\right),\ 10\alpha^2,\ 2 \left(3 \alpha ^2+3 \alpha  \beta +2 \beta ^2+2 \gamma ^2\right)),\\
  & C_{ij}=\mathrm{diag}(2 \alpha  \left(2 \alpha ^2-3 \alpha  \beta -2 \beta ^2-2 \gamma ^2\right),\ -4 \alpha  \left(2 \alpha ^2-3 \alpha  \beta -2 \beta ^2-2 \gamma ^2\right),\\
   &\qquad\qquad\quad 2 \alpha  \left(2 \alpha ^2-3 \alpha  \beta -2 \beta ^2-2 \gamma ^2\right)),\\
  & D_{ij}=\mathrm{diag}(4 \alpha  \left(2 \alpha ^2-3 \alpha  \beta -2 \beta ^2-2 \gamma ^2\right) \left(8 \alpha ^2+3 \alpha  \beta +2 \beta ^2+2 \gamma ^2\right),\\
   &\qquad\qquad\ -8 \alpha  \left(2 \alpha ^2-3 \alpha  \beta -2 \beta ^2-2 \gamma ^2\right) \left(8 \alpha ^2+3 \alpha  \beta +2 \beta ^2+2 \gamma ^2\right),\\
   &\qquad\qquad\quad 4 \alpha  \left(2 \alpha ^2-3 \alpha  \beta -2 \beta ^2-2 \gamma ^2\right) \left(8 \alpha ^2+3 \alpha  \beta +2 \beta ^2+2 \gamma ^2\right)),\\
  & F_{ij}=\mathrm{diag}(-3 \alpha  \zeta,\ 6 \alpha  \zeta,\ -3 \alpha  \zeta),\\
  & G_{ij}=\mathrm{diag}(-3 \alpha\rho  \zeta,\ 6 \alpha \rho \zeta,\ -3 \alpha\rho  \zeta),\\
  & H_{ij}=\mathrm{diag}(2 \zeta  \left(3 \alpha  \beta +2 \beta ^2+2 \gamma ^2\right),\ -8 \alpha ^2 \zeta,\ 2 \zeta  \left(3 \alpha  \beta +2 \beta ^2+2 \gamma ^2\right)),\\
    & K_{ij}=\mathrm{diag}(2 \rho\zeta  \left(3 \alpha  \beta +2 \beta ^2+2 \gamma ^2\right),\ -8 \alpha ^2\rho \zeta,\ 2 \rho\zeta  \left(3 \alpha  \beta +2 \beta ^2+2 \gamma ^2\right)),\\
  & M_{ij}=\mathrm{diag}(3 \alpha  \zeta ^2,\ -6 \alpha  \zeta ^2,\ 3 \alpha  \zeta ^2),\\
   & N_{ij}=\mathrm{diag}(3 \alpha  \rho^2\zeta ^2,\ -6 \alpha  \rho^2\zeta ^2,\ 3 \alpha  \rho^2\zeta ^2).
\end{align*}
By the special structural of $\D^{(1)},\ \B,\ \C,\ \D,\ \F,\ \G,\  \HH,\ \K,\ \M\ \text{and}\ \N$, we further find that these second-order tensors are invariant under rotation transformations in 1-3 plane. If we rotate $\A$ in 1-3 plane of angle $\theta$, anticlockwise, the rotation could be expressed as:
%In addition, if we reflect $\A$ in 1-3 plane of angle $\theta$, we represent the transformations as:
\begin{equation*}
\begin{aligned}
&A_{1112}: 0\to0,\qquad A_{1123}: 0\to0,\qquad A_{1222}: 0\to0,\qquad A_{1223}: 0\to0,\qquad A_{2223}: 0\to0,\\
&A_{1122}: \alpha\to\alpha,\quad A_{2222}:-2\alpha\to-2\alpha,\quad A_{1111}: \beta\to\tilde{\beta}:=(\beta+\frac{3}{4}\alpha)\cos(4\theta)-\gamma\sin(4\theta)-\frac{3}{4}\alpha,\\
&A_{1113}: \gamma\to\tilde{\gamma}:=\gamma\cos(4\theta)+(\beta+\frac{3}{4}\alpha)\sin(4\theta).
\end{aligned}
\end{equation*}
We denote an angle $\phi$, satisfying
$$\cos(\phi):=\frac{\beta+\frac{3}{4}\alpha}{\sqrt{(\beta+\frac{3}{4}\alpha)^2+\gamma^2}}=\frac{\beta+\frac{3}{4}\alpha}{\eta}\quad\text{and}\quad\sin(\phi):=\frac{\gamma}{\sqrt{(\beta+\frac{3}{4}\alpha)^2+\gamma^2}}=\frac{\gamma}{\eta}.$$
Obviously, we have
$$\tilde{\beta}=\eta\cos(\phi+4\theta)-\frac{3}{4}\alpha\quad\text{and}\quad\tilde{\gamma}=\eta\sin(\phi+4\theta).$$
Hence, there exist an angle $\theta$ ($\theta=\frac{\pi-2\phi}{8}$), permitting
$$(\tilde{\beta},\ \tilde{\gamma})=(-\frac{3}{4}\alpha,\ \eta).$$
We thus finish the proof.
\end{proof}

\begin{Proposition}\label{pro2}
Let $\A$ be a fourth-order irreducible tensor, $\alpha,\ \beta,\ \gamma$ be three real numbers, satisfying
\begin{equation*}
\begin{aligned}
  &A_{1113}=A_{1222}=A_{1223}=A_{2223}=0,\\
  &A_{1122}=\alpha,\ A_{2222}=-2\alpha, A_{1111}=-\frac{3}{4}\alpha,\  A_{1112}=\beta,\ A_{1123}=\gamma.
\end{aligned}
\end{equation*}
Suppose $\D^{(1)}$ and $\D^{(2)}$ are two diagonal matrixes, $\zeta$ and $\rho$ are two real numbers, satisfying $\D^{(2)}=\rho\D^{(1)}$, $D^{(1)}_{11}=D^{(1)}_{33}=\zeta$ and $D^{(1)}_{22}=-2\zeta$. Moreover, $\B,\ \C,\ \D,\ \F,\ \G,\  \HH,$ $\K,\ \M,\ \N$ are defined as \eqref{INTE}. Moreover, we define a fourth-order irreducible tensor $\tilde{\A}$, satisfying
\begin{equation*}
\begin{aligned}
  &\tilde{A}_{1113}=\tilde{A}_{1222}=\tilde{A}_{1223}=\tilde{A}_{2223}=0,\\
  &\tilde{A}_{1122}=\alpha,\ \tilde{A}_{2222}=-2\alpha,\ \tilde{A}_{1111}=-\frac{3}{4}\alpha,\ \tilde{A}_{1112}=0,\ \tilde{A}_{1123}=\eta,
\end{aligned}
\end{equation*}
where $$\eta:=\sqrt{\beta^2+\gamma^2}.$$
Then, $(\A,\ \D^{(1)},\ \B,\ \C,\ \D,\ \F,\ \G,\  \HH,\ \K,\ \M,\ \N)$ and $(\tilde{\A},\ \D^{(1)},\ \B,\ \C,\ \D,\ \F,\ \G,\  \HH,\ \K,$ $\M,\ \N)$ are in the same $\rm{SO_3}$-orbit.
\end{Proposition}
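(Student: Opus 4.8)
Following the template of Proposition~\ref{pro1}, the plan is to exhibit a single rotation $\Q_\theta$ in the $1$-$3$ plane (fixing the $2$-axis) that carries the first tuple to the second while leaving every second-order tensor in the list unchanged. First I would complete the component table of $\A$: full symmetry together with the traceless relations $A_{11jk}+A_{22jk}+A_{33jk}=0$ turn the hypotheses into all fifteen independent components. The result splits by how many indices equal $2$. The no-$2$ sector is the in-plane isotropic tensor ($A_{1111}=A_{3333}=-\tfrac34\alpha$, $A_{1133}=-\tfrac14\alpha$, $A_{1113}=A_{1333}=0$); the two-$2$ sector is $\alpha$ times the in-plane identity ($A_{1122}=A_{2233}=\alpha$, $A_{1223}=0$); the three-$2$ sector vanishes ($A_{1222}=A_{2223}=0$); and the only genuinely free sector is the one-$2$ sector, carried by $\beta=A_{1112}$ and $\gamma=A_{1123}$, with $A_{1233}=-\beta$ and $A_{2333}=-\gamma$ forced.

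Next I would check that $\Q_\theta$ fixes every tensor except $\A$ itself. Since $\D^{(1)}=\mathrm{diag}(\zeta,-2\zeta,\zeta)$ and $\D^{(2)}=\rho\D^{(1)}$ have an in-plane block proportional to the identity, they are $\Q_\theta$-invariant. The structural claim is that $\B,\C,\D,\F,\G,\HH,\K,\M,\N$ are invariant as well. I would establish this either by displaying them explicitly in the shape $\mathrm{diag}(a,b,a)$, exactly as in Proposition~\ref{pro1}, or --- more cleanly --- by a weight count: assigning weights $\pm1$ to the in-plane directions and $0$ to the $2$-direction, the table above shows $\A$ carries only weights $0$ and $\pm3$, whereas a symmetric second-order tensor carries only weights $0,\pm1,\pm2$. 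Each of the nine tensors is a contraction of one or two copies of $\A$ against weight-$0$ factors, so the weight of its free index pair lies in $\{0,\pm3,\pm6\}\cap\{0,\pm1,\pm2\}=\{0\}$; this forces the $\mathrm{diag}(a,b,a)$ shape and hence $\Q_\theta$-invariance.

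Then I would compute the action of $\Q_\theta$ on the two surviving components. Because the one-$2$ sector is purely of weight $\pm3$, its three in-plane indices transform with the triple angle, giving
\begin{equation*}
A'_{1112}=\beta\cos 3\theta+\gamma\sin 3\theta,\qquad A'_{1123}=\gamma\cos 3\theta-\beta\sin 3\theta,
\end{equation*}
while $A'_{1113}=A'_{1222}=A'_{1223}=A'_{2223}=0$ and $A'_{1111}=-\tfrac34\alpha$, $A'_{1122}=\alpha$, $A'_{2222}=-2\alpha$ are untouched (these sit in weight-$0$ sectors). Thus $\Q_\theta$ rotates the pair $(\beta,\gamma)$ rigidly, conserving its norm $\eta=\sqrt{\beta^2+\gamma^2}$. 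Writing $(\beta,\gamma)=\eta(\cos\psi,\sin\psi)$ and choosing $\theta=(\psi-\tfrac{\pi}{2})/3$ yields $A'_{1112}=0$ and $A'_{1123}=\eta$, i.e. $\Q_\theta\ast\A=\tilde\A$. Combined with the invariance of $\D^{(1)}$ and of the nine derived tensors, this places the two tuples on a common $\SO$-orbit.

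The only real obstacle is the bookkeeping in the second step: one must be certain that each contraction defining $\B,\ldots,\N$ genuinely lands in the weight-$0$ subspace, equivalently that the nonzero $\beta,\gamma$ never leak into an off-diagonal or frequency-$2$ entry. The weight argument settles this conceptually and uniformly; should a referee prefer the explicit diagonal expressions, they follow from the same direct calculation already carried out in Proposition~\ref{pro1}.
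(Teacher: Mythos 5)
Your proof is correct and follows the same overall strategy as the paper's: both exhibit a single rotation in the $1$-$3$ plane that fixes $\D^{(1)}$, $\D^{(2)}$ and the nine derived second-order tensors while rotating the pair $(A_{1112},A_{1123})=(\beta,\gamma)$ rigidly at frequency $3$, so that a suitable angle sends it to $(0,\eta)$. The difference lies in how the invariance of $\B,\ \C,\ \D,\ \F,\ \G,\ \HH,\ \K,\ \M,\ \N$ is established. The paper computes all nine tensors explicitly and observes that each has the form $\mathrm{diag}(a,b,a)$; you instead give an $\mathrm{SO}(2)$-weight argument: the hypotheses force $\A$ to live purely in the weight-$0$ and weight-$\pm3$ sectors (your completed component table, which checks out against tracelessness, shows the weight-$\pm4$, $\pm2$, $\pm1$ parts all vanish), and since every derived tensor is an equivariant contraction of one or two copies of $\A$ against weight-$0$ factors, its weight set is contained in $\{0,\pm3,\pm6\}\cap\{0,\pm1,\pm2\}=\{0\}$, which is exactly the $\mathrm{diag}(a,b,a)$ shape. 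This buys uniformity --- one argument covers all nine tensors at once, with the small bookkeeping point that $\B$ must be certified weight-$0$ first so that $\C$ and $\D$ qualify as contractions against weight-$0$ factors --- at the cost of asking the reader to trust the equivariance rather than see the entries. Your transformation formulas for $(\beta,\gamma)$ differ from the paper's by the sign of $\theta$ (a choice of orientation of the rotation), which is immaterial since $\theta$ is free, and your final choice of angle indeed produces $(0,\eta)$.
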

\begin{proof}
The proof is similar to Proposition \ref{pro1}. With some calculations, we have
\begin{align*}
  & B_{ij}=\mathrm{diag}\left(\frac{3}{4}(8(\beta^2+\gamma^2)+5\alpha^2\right),\ 2(2(\beta^2+\gamma^2)+5\alpha^2),\ \frac{3}{4}(8(\beta^2+\gamma^2)+5\alpha^2),\\
  & C_{ij}=\mathrm{diag}\left(\frac{1}{4}\alpha(-8(\beta^2+\gamma^2)+25\alpha^2\right),\ -\frac{1}{2}\alpha(-8(\beta^2+\gamma^2)+25\alpha^2),\\ &\qquad\qquad\quad \frac{1}{4}\alpha(-8(\beta^2+\gamma^2)+25\alpha^2)),\\
  & D_{ij}=\mathrm{diag}\left(\frac{5}{16}\alpha(-8(\beta^2+\gamma^2\right)+25\alpha^2)(8(\beta^2+\gamma^2)+11\alpha^2),\\
   &\qquad\qquad-\frac{5}{8}\alpha(-8(\beta^2+\gamma^2)+25\alpha^2)(8(\beta^2+\gamma^2)+11\alpha^2),\\
   &\qquad\qquad\quad \frac{5}{16}\alpha(-8(\beta^2+\gamma^2)+25\alpha^2)(8(\beta^2+\gamma^2)+11\alpha^2)),\\
  & F_{ij}=\mathrm{diag}(-3\alpha\zeta,\ 6\alpha\zeta,\ -3\alpha\zeta),\\
  & G_{ij}=\mathrm{diag}(-3\alpha\rho\zeta,\ 6\alpha\rho\zeta,\ -3\alpha\rho\zeta),\\
  & H_{ij}=\mathrm{diag}\left(-\frac{9}{4}\alpha^2\zeta,\ 4(\beta^2+\gamma^2-2\alpha^2\right)\zeta,\ -\frac{9}{4}\alpha^2\zeta),\\
  & K_{ij}=\mathrm{diag}\left(-\frac{9}{4}\alpha^2\rho\zeta,\ 4(\beta^2+\gamma^2-2\alpha^2\right)\rho\zeta,\ -\frac{9}{4}\alpha^2\rho\zeta),\\
  & M_{ij}=\mathrm{diag}(3\alpha\zeta^2,\ -6\alpha\zeta,\ 3\alpha\zeta^2),\\
  & N_{ij}=\mathrm{diag}(3\alpha\rho^2\zeta^2,\ -6\alpha\rho^2\zeta,\ 3\alpha\rho^2\zeta^2).
\end{align*}
Analogous to the results in Proposition \ref{pro1}, $\D^{(1)},\ \B,\ \C,\ \D,\ \F,\ \G,\  \HH,\ \K,\ \M\ \text{and}\ \N$ are invariant under rotation transformation in 1-3 plane. If we rotate $\A$ in 1-3 plane of angle $\theta$, anticlockwise, the rotation could be presented by:
%In addition, if we reflect $\A$ in 1-3 plane of angle $\theta$, we represent the transformations as:
\begin{equation*}
\begin{aligned}
&A_{1113}: 0\to0,\qquad A_{1222}: 0\to0,\qquad A_{1223}: 0\to0,\qquad A_{2223}: 0\to0,\\
&A_{1122}: \alpha\to\alpha,\qquad A_{2222}:-2\alpha\to-2\alpha,\qquad A_{1111}:-\frac{3}{4}\alpha\to-\frac{3}{4}\alpha,\\
&A_{1112}: \beta\to\tilde{\beta}:=\beta\cos(3\theta)-\gamma\sin(3\theta),\qquad A_{1123}: \gamma\to\tilde{\gamma}:=\beta\sin(3\theta)+\gamma\cos(3\theta).
\end{aligned}
\end{equation*}
We denote an angle $\phi$, satisfying
$$\cos(\phi):=\frac{\beta}{\sqrt{\beta^2+\gamma^2}}=\frac{\beta}{\eta}\quad\text{and}\quad\sin(\phi):=\frac{\gamma}{\sqrt{\beta^2+\gamma^2}}=\frac{\gamma}{\eta}.$$
Obviously, we have
$$\tilde{\beta}=\eta\cos(\phi+3\theta)\quad\text{and}\quad\tilde{\gamma}=\eta\sin(\phi+3\theta).$$
Hence, there exist an angle $\theta$ ($\theta=\frac{\pi-2\phi}{6}$), permitting
$$(\tilde{\beta},\ \tilde{\gamma})=(0,\ \eta).$$
We thus finish the proof.
\end{proof}

\begin{Remark}
We can learn from the above two propositions that under the conditions of Proposition \ref{pro1} (resp. Proposition \ref{pro2}), the $\rm{SO}_3$-orbit of $\A$ can be completely determined by the value of $(A_{1111}+\frac{3}{4}A_{1122})^2+A_{1113}^2$ (resp.  $A_{1112}^2+A_{1123}^2$). The results will be used in subcases (II.2.2.1) and (II.2.2.2) in the following discussion respectively.
\end{Remark}

Now we focus on the main part of the proof. Note that $\B$ is a second-order symmetric tensor, by choosing a proper coordinate system, we could make $\B$ a diagonal tensor, i.e.,
\begin{equation*}
  B_{ij}=\mathrm{diag}(B_{11},B_{22},B_{33}).
\end{equation*}
Furthermore equations $B_{ij}=A_{ik\ell m}A_{jk\ell m}\ (i,j=1,2,3\ \text{and hereinafter})$ could be rewritten as
{\footnotesize
\begin{equation*}%\label{Eqn-B}
\left\{\begin{aligned}
  & 2 A_{1111}^2+3 A_{1122} A_{1111}+3 A_{1112}^2+2 A_{1113}^2+3 A_{1122}^2+3 A_{1123}^2+2 A_{1222}^2+2
   A_{1223}^2+3 A_{1112} A_{1222}+A_{1113} A_{1223} \\
   &\qquad\qquad\qquad\qquad\qquad\qquad\qquad\qquad\qquad\qquad\qquad\qquad\qquad\qquad\qquad\qquad\qquad\qquad\qquad\qquad\qquad= \frac{1}{2}B_{11}, \\
  & 2 A_{1112}^2+3 A_{1222} A_{1112}+3 A_{1122}^2+2 A_{1123}^2+3 A_{1222}^2+3 A_{1223}^2+2 A_{2222}^2+2
   A_{2223}^2+3 A_{1122} A_{2222}+A_{1123} A_{2223}\\
    &\qquad\qquad\qquad\qquad\qquad\qquad\qquad\qquad\qquad\qquad\qquad\qquad\qquad\qquad\qquad\qquad\qquad\qquad\qquad\qquad\qquad= \frac{1}{2}B_{22}, \\
  & 2 A_{1111}^2+\left(5 A_{1122}+A_{2222}\right) A_{1111}+3 A_{1112}^2+2 A_{1113}^2+5 A_{1122}^2+3
   A_{1123}^2+3 A_{1222}^2+3 A_{1223}^2+2 A_{2222}^2+2 A_{2223}^2 \\&\qquad\qquad\qquad\qquad\qquad\qquad\qquad\qquad +6 A_{1112} A_{1222}+3 A_{1113}
   A_{1223}+5 A_{1122} A_{2222}+3 A_{1123} A_{2223} = \frac{1}{2}B_{33}, \\
  & A_{1122} A_{1123}+2 A_{2222} A_{1123}+3 A_{1113} A_{1222}+A_{1112} \left(4 A_{1113}-3 A_{1223}\right)-2
   A_{1122} A_{2223}-A_{1111} \left(4 A_{1123}+A_{2223}\right)\\
   &\qquad\qquad\qquad\qquad\qquad\qquad\qquad\qquad\qquad\qquad\qquad\qquad\qquad\qquad\qquad\qquad\qquad\qquad\qquad\qquad\qquad = 0, \\
  & -3 A_{1123} A_{1222}+4 A_{2223} A_{1222}+2 A_{1111} A_{1223}+A_{1122} A_{1223}-4 A_{1223}
   A_{2222}-A_{1113} \left(2 A_{1122}+A_{2222}\right)\\
    &\qquad\qquad\qquad\qquad\qquad\qquad\qquad\qquad\qquad\qquad\qquad\qquad\qquad\qquad\qquad\qquad\qquad\qquad+3 A_{1112} A_{2223}= 0, \\
  & 4 A_{1113} A_{1123}+7 A_{1223} A_{1123}+9 A_{1122} A_{1222}+A_{1111} \left(4 A_{1112}+3
   A_{1222}\right)+4 A_{1222} A_{2222}+3 A_{1112} \left(3 A_{1122}+A_{2222}\right) \\&\qquad\qquad\qquad\qquad\qquad\qquad\qquad\qquad\qquad\qquad\qquad\qquad\qquad\qquad\qquad\qquad+A_{1113} A_{2223}+4
   A_{1223} A_{2223} = 0.
\end{aligned}\right.
\end{equation*}
}

Equations $C_{ij}=A_{ijk\ell}B_{k\ell}$ are represented as
\begin{equation}\label{Eqn-C}
\left\{\begin{aligned}
  & A_{1111} \left(B_{11}-B_{33}\right)+A_{1122} \left(B_{22}-B_{33}\right) = C_{11}, \\
  & A_{1112} \left(B_{11}-B_{33}\right)+A_{1222} \left(B_{22}-B_{33}\right) = C_{12}, \\
  & A_{1113} \left(B_{11}-B_{33}\right)+A_{1223} \left(B_{22}-B_{33}\right) = C_{13}, \\
  & A_{1122} \left(B_{11}-B_{33}\right)+A_{2222} \left(B_{22}-B_{33}\right) = C_{22}, \\
  & A_{1123} \left(B_{11}-B_{33}\right)+A_{2223} \left(B_{22}-B_{33}\right) = C_{23}.
\end{aligned}\right.
\end{equation}
\textbf{Case (I). Any two of $B_{11}$, $B_{22}$ and $B_{33}$ are not equal to each other}, i.e., $B_{11}\neq B_{22}$, $B_{22}\neq B_{33}$ and $B_{33}\neq B_{11}$. The symmetric and traceless tensor $D_{ij}=A_{ijk\ell}B_{km}B_{\ell m}$ could be rewritten as
\begin{equation}\label{Eqn-D}
\left\{\begin{aligned}
  & A_{1111} \left(B_{11}^2-B_{33}^2\right)+A_{1122} \left(B_{22}^2-B_{33}^2\right) = D_{11}, \\
  & A_{1112} \left(B_{11}^2-B_{33}^2\right)+A_{1222} \left(B_{22}^2-B_{33}^2\right) = D_{12}, \\
  & A_{1113} \left(B_{11}^2-B_{33}^2\right)+A_{1223} \left(B_{22}^2-B_{33}^2\right) = D_{13}, \\
  & A_{1122} \left(B_{11}^2-B_{33}^2\right)+A_{2222} \left(B_{22}^2-B_{33}^2\right) = D_{22}, \\
  & A_{1123} \left(B_{11}^2-B_{33}^2\right)+A_{2223} \left(B_{22}^2-B_{33}^2\right) = D_{23}.
\end{aligned}\right.
\end{equation}
Combining the first equations of both  \eqref{Eqn-C} and \eqref{Eqn-D}, we get a linear system about $A_{1111}$ and $A_{1122}$:
\begin{equation*}
  \left(\begin{array}{cc}
    B_{11}-B_{33}     & B_{22}-B_{33} \\
    B_{11}^2-B_{33}^2 & B_{22}^2-B_{33}^2 \\
  \end{array}\right)
  \left(\begin{array}{c}
      A_{1111} \\ A_{1122} \\
  \end{array}\right) =
  \left(\begin{array}{c}
      C_{11} \\ D_{11} \\
  \end{array}\right).
\end{equation*}
Since the determinant of its coefficient matrix is
\begin{equation*}
  \det \left(\begin{array}{cc}
    B_{11}-B_{33}     & B_{22}-B_{33} \\
    B_{11}^2-B_{33}^2 & B_{22}^2-B_{33}^2 \\
  \end{array}\right) = -(B_{11}-B_{22})(B_{11}-B_{33})(B_{22}-B_{33}) \neq0,
\end{equation*}
the linear system owns a unique solution
\begin{equation*}
  A_{1111}=\frac{-C_{11}(B_{22}+B_{33})+D_{11}}{(B_{11}-B_{22})(B_{11}-B_{33})} \qquad\text{ and }\qquad
  A_{1122}=\frac{C_{11}(B_{11}+B_{33})-D_{11}}{(B_{11}-B_{22})(B_{22}-B_{33})}.
\end{equation*}
Using a similar approach, we obtain values of
\begin{equation*}
  A_{1112}, \quad A_{1222}, \quad A_{1113}, \quad A_{1223}, \quad A_{2222}, \quad A_{1123}\quad \text{ and }\quad A_{2223}
\end{equation*}
respectively.\\
%Finally, according to the fourth equation of \eqref{Eqn-C}, we obtain
%\begin{equation*}
%  A_{2222} = \frac{C_{22} - (B_{11}-B_{33})A_{1122}}{B_{22}-B_{33}}.
%\end{equation*}
\textbf{Case (II)}. {\bf All of $B_{11},\ B_{22},\ B_{33}$ are equal}, i.e., $B_{11}=B_{22}=B_{33}$. With simple computations, we find that $\C=\D={\bf 0}$. Moreover, since $\B$ is invariant under any three-dimensional rotation transformation, we could further choose a proper coordinate system to make $\D^{(1)}$ a diagonal tensor and suppose
$$\D^{(1)}:=\mathrm{diag}(d_{1},d_{2},-d_{1}-d_{2}).$$
By identities $F_{ij}=A_{ijk\ell}D^{(1)}_{k\ell}$ and $M_{ij}=A_{ijk\ell}D^{(1)}_{km}D^{(1)}_{\ell m}$, we have
\begin{equation}\label{F}
\left\{\begin{aligned}
  & (2d_{1}+d_{2})A_{1111}+ (d_{1}+2d_{2})A_{1122}=F_{11},\\
  & (2d_{1}+d_{2})A_{1112}+ (d_{1}+2d_{2})A_{1222}=F_{12},\\
  & (2d_{1}+d_{2})A_{1113}+ (d_{1}+2d_{2})A_{1223}=F_{13},\\
  & (2d_{1}+d_{2})A_{1122}+ (d_{1}+2d_{2})A_{2222}=F_{22},\\
  & (2d_{1}+d_{2})A_{1123}+ (d_{1}+2d_{2})A_{2223}=F_{23},
\end{aligned}\right.
\end{equation}
and
\begin{equation}\label{M}
\left\{\begin{aligned}
  & d_{2}(2d_{1}+d_{2})A_{1111}+ d_{1}(d_{1}+2d_{2})A_{1122}=-M_{11},\\
  & d_{2}(2d_{1}+d_{2})A_{1112}+ d_{1}(d_{1}+2d_{2})A_{1222}=-M_{12},\\
  & d_{2}(2d_{1}+d_{2})A_{1113}+ d_{1}(d_{1}+2d_{2})A_{1223}=-M_{13},\\
  & d_{2}(2d_{1}+d_{2})A_{1122}+ d_{1}(d_{1}+2d_{2})A_{2222}=-M_{22},\\
  & d_{2}(2d_{1}+d_{2})A_{1123}+ d_{1}(d_{1}+2d_{2})A_{2223}=-M_{23}.
\end{aligned}\right.
\end{equation}
Similarly, we first consider the determinant of $A_{1111}$ and $A_{1122}$ in the above two linear systems. We find that if
$$(d_{1}-d_{2})(2d_{1}+d_{2})(d_{1}+2d_{2})\neq0,$$
then the determinant is also nonzero. Hence we could determine $A_{1111}$ and $A_{1122}$, and further the rest elements of $\A$ step by step. Otherwise, if
$$(d_{1}-d_{2})(2d_{1}+d_{2})(d_{1}+2d_{2})=0,$$
one of the following three equations should be satisfied:
$$d_{1}=d_{2},\quad 2d_{1}+d_{2}=0\quad\text{and}\quad d_{1}+2d_{2}=0.$$
Without loss of generality, we only need to consider the circumstance of $2d_{1}+d_{2}=0$. Since $\D^{(1)}\neq\bf{0}$, we have that  $d_{1}+2d_{2}\neq0$ and $d_{1}\neq0$. Moreover, equations \eqref{F} and \eqref{M} can be simplified as:
\begin{equation*}
\left\{\begin{aligned}
  & A_{1122}=-\frac{F_{11}}{3d_{1}}=\frac{M_{11}}{3d_{1}^2},\\
  & A_{1222}=-\frac{F_{12}}{3d_{1}}=\frac{M_{12}}{3d_{1}^2},\\
  & A_{1223}=-\frac{F_{13}}{3d_{1}}=\frac{M_{13}}{3d_{1}^2},\\
  & A_{2222}=-\frac{F_{22}}{3d_{1}}=\frac{M_{22}}{3d_{1}^2},\\
  & A_{2223}=-\frac{F_{23}}{3d_{1}}=\frac{M_{23}}{3d_{1}^2}.
\end{aligned}\right.
\end{equation*}
Based on the special form, $\D^{(1)}$ is also invariant under rotation transformation in 1-3 plane. Therefore, we could further choose a proper coordinate system to make $F_{13}=0$. This setting also leads that $A_{1223}=0$ and $M_{13}=0$. Note that the remainder undetermined components of $\A$ are $A_{1111},\ A_{1112},\ A_{1113}$ and $A_{1123}.$ Next, we will take up discussion in more details.\\
(II.1) If $2A_{1122}+A_{2222}\neq0$, from the identity of $H_{13}$, i.e., $H_{13}:=A_{1pq\ell}A_{3pqm}D^{(1)}_{\ell m}$, we have that
$$A_{1113}=-\frac{H_{13}+2A_{1222}A_{2223}}{2d_{1}(2A_{1122}+A_{2222})}.$$
Moreover, from identities of $H_{11}$ and $H_{33}$, we get equations
\begin{equation}\label{A1111}
\left\{\begin{aligned}
  & 4 A_{1111}^2 d_{1}+6 A_{1122} A_{1111} d_{1}+4 A_{1113}^2 d_{1}-2 A_{1222}^2 d_{1}=H_{11},\\
  & 4 A_{1111}^2 d_{1}+(10 A_{1122} +2 A_{2222}) A_{1111} d_{1}+4 A_{1113}^2 d_{1}+4 A_{1122}^2 d_{1}-2 A_{2222}^2 d_{1}-2 A_{2223}^2 d_{1}\\
  &\qquad\qquad\qquad\qquad\qquad\qquad\qquad\qquad\qquad\qquad\qquad\qquad\qquad\qquad\qquad-2 A_{1122} A_{2222} d_{1}=H_{33}.
\end{aligned}\right.
\end{equation}
Considering the coefficients of $A_{1111}^2$ and $A_{1111}$, we find that
\begin{equation*}
\begin{aligned}
(4d_{1})(10 A_{1122} +2 A_{2222})d_{1}-(6A_{1122}d_{1})(4d_{1})
=8d_{1}^2(2 A_{1122} + A_{2222})\neq 0
\end{aligned}
\end{equation*}
is always valid. Thus, we could determine $A_{1111}$ from \eqref{A1111}.
Furthermore, from $B_{13}=0$ and $B_{11}-B_{33}=0$, we know that
\begin{equation}\label{A2223A1222}
\left\{\begin{aligned}
  & 3A_{2223}A_{1112}-3A_{1222}A_{1123}=(2A_{1122}+A_{2222})A_{1113}-4A_{1222}A_{2223},\\
  & -6A_{1222}A_{1112}-6A_{2223}A_{1123}=2(2A_{1122}+A_{2222})A_{1111}+4A_{1122}^2+2A_{1222}^2+10A_{1122}A_{2222}\\
  &\qquad\qquad\qquad\qquad\qquad\qquad\qquad\qquad\qquad\qquad\qquad\qquad\qquad\qquad\qquad+4A_{2222}^2+4A_{2223}^2.
\end{aligned}\right.
\end{equation}
We regard \eqref{A2223A1222} as a linear system of $A_{1112}$ and $A_{1123}$. The determinant is $-18(A_{2223}^2+A_{1222}^2)$.\\
(II.1.1) If$A_{2223}^2+A_{1222}^2\neq0$, we could determine $A_{1112}$ and $A_{1123}$ from \eqref{A2223A1222}.\\
%we have
%\begin{equation}\label{A1112A1123}
%\left\{\begin{aligned}
%A_{1112}=&-\frac{2A_{1111}A_{1122}A_{1222}}{3(A_{1222}^2+A_{2223}^2)}-\frac{2A_{1122}^2A_{1222}}{3(A_{1222}^2+A_{2223}^2)}-\frac{A_{1222}^3}{3(A_{1222}^2+A_{2223}^2)}-\frac{2A_{1111}A_{1222}A_{2222}}{3(A_{1222}^2+A_{2223}^2)}\\ &-\frac{5A_{1122}A_{1222}A_{2222}}{3(A_{1222}^2+A_{2223}^2)}+\frac{2A_{1113}A_{1122}A_{2223}}{3(A_{1222}^2+A_{2223}^2)}+\frac{2A_{1113}A_{2222}A_{2223}}{3(A_{1222}^2+A_{2223}^2)}-\frac{6A_{1222}A_{2223}^2}{3(A_{1222}^2+A_{2223}^2)},\\
%A_{1123}=&-\frac{2A_{1113}A_{1122}A_{1222}}{3(A_{1222}^2+A_{2223}^2)}-\frac{A_{1113}A_{1222}A_{2222}}{3(A_{1222}^2+A_{2223}^2)}-\frac{A_{1111}A_{1122}A_{2223}}{3(A_{1222}^2+A_{2223}^2)}-\frac{2A_{1122}^2A_{2223}}{3(A_{1222}^2+A_{2223}^2)}\\ &+\frac{3A_{1222}^2A_{2223}}{3(A_{1222}^2+A_{2223}^2)}-\frac{A_{1111}A_{2222}A_{2223}}{3(A_{1222}^2+A_{2223}^2)}-\frac{5A_{1122}A_{2222}A_{2223}}{3(A_{1222}^2+A_{2223}^2)}-\frac{2A_{2223}^3}{3(A_{1222}^2+A_{2223}^2)}.\\
%\end{aligned}\right.
%\end{equation}
(II.1.2) If$A_{2223}^2+A_{1222}^2=0$, we have $A_{2223}=A_{1222}=0$. Then $B_{13}=-A_{1113}(2A_{1122}+A_{2222})=0$. Since $2A_{1122}+A_{2222}\neq 0$, we have $A_{1113}=0$. With these results, we can simplify the second equation of \eqref{A2223A1222} as
$$\left(2 A_{1122}+A_{2222}\right) \left(A_{1111}+A_{1122}+2 A_{2222}\right)=0.$$
Since $2A_{1122}+A_{2222}\neq 0$, it implies that $$A_{1111}=-A_{1122}-2 A_{2222}.$$ Now the remainder undetermined components of $\A$ are $A_{1112}$ and $A_{1123}$.
Next, from $B_{12}=B_{23}=0$, we obtain
\begin{equation*}
\left\{\begin{aligned}
  & A_{1112}(4A_{1111}+9A_{1122}+3A_{2222})=0,\\
  & -A_{1123}(4A_{1111}-A_{1122}-2A_{2222})=0.
\end{aligned}\right.
\end{equation*}
(II.1.2.1) If $A_{1112}=0$ and $A_{1123}=0$, then $\A$ has been determined.\\
(II.1.2.2) If $A_{1112}=0$ and $A_{1123}\neq0$, we have $A_{1122}=-4A_{1111}+2A_{2222}$. Then, from $B_{23}=0$, we know $$-4A_{1123}(2A_{1111}-A_{2222})=0.$$
Hence, $A_{2222}=2A_{1111}$. In addition, from $B_{22}-B_{33}=0$, we have that
  $$-8A_{1111}^2-2A_{1123}^2=0,$$
 which yields $A_{1123}=0$.\\
(II.1.2.3) If $A_{1112}\neq0$ and $A_{1123}=0$, we have \begin{equation}\label{contra1}
A_{1122}=-\frac{4}{9}A_{1111}-\frac{1}{3}A_{2222}.
 \end{equation}
 From the identity of $H_{12}$, we have
$$\left(\frac{8}{3}A_{1111}-A_{2222}\right)d_{11}A_{1112}=H_{12}.$$
Therefore, if $A_{1111}\neq\frac{3}{8}A_{2222}$, we could determine $A_{1112}$. Otherwise, we have
\begin{equation}\label{contra2}
A_{1111}=\frac{3}{8}A_{2222}.
 \end{equation}
Combining \eqref{contra1} and \eqref{contra2} together, we find that
$A_{1122}=-\frac{1}{2}A_{2222}$. It draws a contradiction to $2A_{1122}+A_{2222}\neq0$.\\
(II.1.2.4) If $A_{1112}\neq0$ and $A_{1123}\neq0$, then we have
\begin{equation*}
\left\{\begin{aligned}
  & 4A_{1111}+9A_{1122}+3A_{2222}=0,\\
  & 4A_{1111}-A_{1122}-2A_{2222}=0.
\end{aligned}\right.
\end{equation*}
 The equations set also implies $2A_{1122}+A_{2222}=0$, which leads to a contradiction.\\
(II.2) If $2A_{1122}+A_{2222}=0$, we consider $A_{1112}$ and $A_{1123}$ first. From $B_{13}=0$ and $B_{11}-B_{33}=0$, we obtain a linear system of $A_{1112}$ and $A_{1123}$:
\begin{equation*}\label{1A2223A1222}
\left\{\begin{aligned}
  & 3A_{2223}A_{1112}-3A_{1222}A_{1123}=-4A_{1222}A_{2223},\\
  & -6A_{1222}A_{1112}-6A_{2223}A_{1123}=2A_{1222}^2+4A_{2223}^2.
\end{aligned}\right.
\end{equation*}
(II.2.1) If the determinant of this linear system $A_{2223}^2+A_{1222}^2\neq0$, we could immediately solve this linear system and determine $A_{1112}$ and $A_{1123}$ by $A_{1222}$ and $A_{2223}$.
The expressions are
\begin{equation}\label{1A1112A1123}
\left\{\begin{aligned}
A_{1112}=&-\frac{A_{1222} \left(A_{1222}^2+6 A_{2223}^2\right)}{3
   \left(A_{1222}^2+A_{2223}^2\right)},\\
A_{1123}=&\frac{A_{2223} \left(3 A_{1222}^2-2 A_{2223}^2\right)}{3
   \left(A_{1222}^2+A_{2223}^2\right)}.
\end{aligned}\right.
\end{equation}
Next, we focus on $A_{1111}$ and $A_{1113}$. Combining \eqref{1A1112A1123} and $B_{12}=B_{23}=0$, a linear system of $A_{1111}$ and $A_{1113}$ is obtained by
\begin{equation*}
\left\{\begin{aligned}
&A_{1111} (4A_{1112}+3 A_{1222})+A_{1113}(4A_{1123}+A_{2223})+3 A_{1122} A_{1112}+A_{1122} A_{1222} =0,\\
&-A_{1111} (4A_{1123}+A_{2223})+A_{1113}(4A_{1112}+3 A_{1222})-3 A_{1122} A_{1123}-2A_{1122} A_{2223} =0.
\end{aligned}\right.
\end{equation*}
We also consider the determinant of the above linear system. If $(4A_{1112}+3 A_{1222})^2+(4A_{1123}+A_{2223})^2\neq0$, we could immediate determine $A_{1111}$ and $A_{1113}$. Otherwise, we have the relations
\begin{equation}\label{2A}
A_{1222}=-\frac{4}{3}A_{1112}\qquad\text{and}\qquad A_{2223}=-4A_{1123}.
\end{equation}
From $B_{12}=B_{13}=B_{23}=0$, we have
\begin{equation*}
\left\{\begin{aligned}
&A_{1112}A_{1122}=0,\\
&A_{1112}A_{1123}=0,\\
&A_{1122}A_{1123}=0.
\end{aligned}\right.
\end{equation*}
(II.2.1.1) If $A_{1112}=0$, then $B_{11}-B_{33}=-40A_{1123}^2=0$, which further yields $A_{1123}=0$. Then from \eqref{2A}, we have $A_{1222}=A_{2223}=0$. This leads to a contradiction to $A_{2223}^2+A_{1222}^2\neq0$.\\
(II.2.1.2) If $A_{1112}\neq0$, then $A_{1122}=A_{1123}=0$. In this case, we have $B_{11}-B_{33}=\frac{40}{9}A_{1112}^2=0$, which yields $A_{1112}=0$. This draws a contradiction to $A_{1112}\neq0$.\\
(II.2.2) If $A_{2223}^2+A_{1222}^2=0$, we have that $A_{2223}=A_{1222}=0$. From $B_{12}=B_{23}=0$, a linear system of $A_{1111}$ and $A_{1113}$ are obtained by
\begin{equation}\label{1A1111A1113}
\left\{\begin{aligned}
&4A_{1111}A_{1112}+4A_{1113}A_{1123}+3 A_{1122} A_{1112}=0,\\
&-4A_{1111}A_{1123}+4A_{1113}A_{1112}-3 A_{1122} A_{1123}=0.
\end{aligned}\right.
\end{equation}
The determinant of this linear system is $A_{1112}^2+A_{1123}^2$.\\
(II.2.2.1) If $A_{1112}^2+A_{1123}^2=0$, then we have $A_{1112}=A_{1123}=0$. Therefore, $\A$ and $\D^{(1)}$ satisfy the conditions of Proposition \ref{pro1}. In addition, from identity of $B_{11}$, we have
\begin{equation*}%\label{ro1}
(A_{1111}+\frac{3}{4}A_{1122})^2+A_{1113}^2=\frac{1}{4}B_{11}-\frac{15}{16}A_{1122}^2.
\end{equation*}
Hence, by Proposition \ref{pro1}, we could choose
$$A_{1111}=-\frac{3}{4}A_{1122},\quad \text{and}\quad A_{1113}=\sqrt{\frac{1}{4}B_{11}-\frac{15}{16}A_{1122}^2}.$$\\
(II.2.2.2) If $A_{1112}^2+A_{1123}^2\neq0$, from equations \eqref{1A1111A1113}, we have
\begin{equation*}
\left\{\begin{aligned}
&A_{1111}=-\frac{3}{4}A_{1122},\\
&A_{1113}=0.
\end{aligned}\right.
\end{equation*}
Therefore, $\A$ and $\D^{(1)}$ satisfy the conditions of Proposition \ref{pro2}. Whereafter, from $B_{11}-B_{22}=0$, we know $$A_{1112}^2+A_{1123}^2=\frac{25}{8}A_{1122}^2.$$
According to Proposition \ref{pro2}, we could choose
$$A_{1112}=0,\quad \text{and}\quad A_{1123}=\frac{5\sqrt{2}}{4}|A_{1122}|.$$

In conclusion, we could always determine either each element of $\A$, or the $\rm{O}_3$-orbit of $\A$ by $\B,\ \C,\ \F,\ \HH$ and $\M$ in this subcase.\\
\textbf{Case (III).} {\bf Two of $B_{11},\ B_{22},\ B_{33}$ are equal, but they are not equal to the third one}. Without loss of generality, we assume $B_{11}=B_{33}\neq B_{22}$. In this case, we denote $$\D^{(1)}:=(d_{ij}),\ i,j=1,2,3.$$ From \eqref{Eqn-C}, we immediately have
\begin{equation*}
\left\{\begin{aligned}
  & A_{1122}  = \frac{C_{11}}{B _{22}-B _{33}}, \\
  &  A_{1222}  = \frac{C_{12}}{B _{22}-B _{33}}, \\
  &  A_{1223}  = \frac{C_{13}}{B _{22}-B _{33}}, \\
  &  A_{2222}  = \frac{C_{22}}{B _{22}-B _{33}}, \\
  &  A_{2223}  = \frac{C_{23}}{B _{22}-B _{33}}.
\end{aligned}\right.
\end{equation*}
The remainder undetermined components of $\A$ are $A_{1111},\ A_{1112},\ A_{1113}$ and $A_{1123}$. Since $\B$ is still invariant under rotation transformation in 1-3 plane, we could further choose a proper coordinate system to make $d_{13}=0$. From $F_{ij}=A_{ijk\ell}D^{(1)}_{k\ell}$, we know that
\begin{equation}\label{d}
\left\{\begin{aligned}
   &(2d_{11}+d_{22})A_{1111}+2d_{12}A_{1112}+2d_{23}A_{1123}=F_{11}-A_{1122}(d_{11}+2d_{22}),\\
   & (2d_{11}+d_{22})A_{1112}  = F_{12}-A_{1222}(d_{11}+2d_{22})-2A_{1122}d_{12}-2A_{1223}d_{23}, \\
    &-2d_{23}A_{1112}+(2d_{11}+d_{22})A_{1113}+2d_{12}A_{1123}=F_{13}-A_{1223}(d_{11}+2d_{22})+2A_{1222}d_{23},\\
& (2d_{11}+d_{22})A_{1123}=F_{23}-A_{2223}(d_{11}+2d_{22})-A_{1223}d_{12}+2A_{1122}d_{23}+2A_{2222}d_{23}.
\end{aligned}\right.
\end{equation}
(III.1) If $d_{22}\neq-2d_{11}$, then, from the second and the fourth equations of \eqref{d}, we could determine $A_{1112}$ and $A_{1123}$.
%\begin{equation*}
%\left\{\begin{aligned}
%&A_{1112}  = \frac{F_{12}-A_{1222}(d_{11}+2d_{22})-2A_{1122}d_{12}-2A_{1223}d_{23}}{2d_{11}+d_{22}}, \\
%&A_{1123}=\frac{F_{23}-A_{2223}(d_{11}+2d_{22})-A_{1223}d_{12}+2A_{1122}d_{23}+2A_{2222}d_{23}}{2d_{11}+d_{22}}.
%\end{aligned}\right.
%\end{equation*}
Furthermore, from the first and the third equations of \eqref{d}, we could determine $A_{1111}$ and $A_{1113}$.
%\begin{equation*}
%\left\{\begin{aligned}
%&A_{1111}= \frac{F_{11}-A_{1122}(d_{11}+2d_{22})-2d_{12}A_{1112}-2d_{23}A_{1123}}{2d_{11}+d_{22}}, \\
%&A_{1113}=\frac{F_{13}-A_{1223}(d_{11}+2d_{22})+2A_{1222}d_{23}+2d_{23}A_{1112}-2d_{12}A_{1123}}{2d_{11}+d_{22}}.
%\end{aligned}\right.
%\end{equation*}
Thus each element of $\A$ is known.\\
(III.2) If $d_{22}=-2d_{11}$, then $d_{11}\neq0$. Also from $\eqref{d}$, we have
\begin{equation*}
\left\{\begin{aligned}
  & 2d_{12}A_{1112}+2d_{23}A_{1123}=F_{11}+3d_{11}A_{1122},\\
  &  -2d_{23}A_{1112}+2d_{12}A_{1123}=F_{13}+3d_{11}A_{1223}+2A_{1222}d_{23}.
\end{aligned}\right.
\end{equation*}
(III.2.1) If $d_{12}^2+d_{23}^2\neq0$, we could immediately determine $A_{1112}$ and $A_{1123}$, since $A_{1122}$ and $A_{1222}$ are known.
%\begin{equation*}
%\left\{\begin{aligned}
%  & A_{1112}=\frac{-2 A_{1222} d_{23}^2-3 A_{1223} d_{11} d_{23}+3 A_{1122} d_{11}
%   d_{12}-d_{23} F_{13}+d_{12} F_{11}}{2 \left(d_{12}^2+d_{23}^2\right)},\\
%  & A_{1123}=\frac{3 A_{1223} d_{11} d_{12}+2 A_{1222} d_{23} d_{12}+3 A_{1122} d_{11}
%   d_{23}+d_{12} F_{13}+d_{23} F_{11}}{2 \left(d_{12}^2+d_{23}^2\right)}.
%\end{aligned}\right.
%\end{equation*}
Then, from $M_{ij}=A_{ijk\ell}D^{(1)}_{km}D^{(1)}_{\ell m}$, we obtain two linear equations of $A_{1111}$ and $A_{1113}$ as below.
{\small
\begin{equation*}
\left\{\begin{aligned}
   &A_{1111}
   (d_{12}^2-d_{23}^2)+2 A_{1113} d_{12} d_{23}+3 A_{1122} d_{11}^2-2 A_{1112} d_{12} d_{11}-2 A_{1123} d_{23} d_{11}+A_{1122} d_{12}^2 =M_{11},\\
   &-2 A_{1111}d_{12} d_{23}+A_{1113}(d_{12}^2-d_{23}^2)+3 A_{1223} d_{11}^2+2 A_{1112} d_{23} d_{11}+2 A_{1222}
   d_{23} d_{11}+A_{1223} d_{12}^2 \\
   &\qquad\qquad\qquad\qquad\qquad\qquad\qquad\qquad\qquad\qquad\qquad\qquad\qquad-2 A_{1123} d_{12} d_{11}-2 A_{1122} d_{12} d_{23}=M_{13}.
\end{aligned}\right.
\end{equation*}
}
The determinant of this linear system is $(d_{12}^2-d_{23}^2)^2+4 d_{12}^2 d_{23}^2$.\\
(III.2.1.1) If $(d_{12}^2-d_{23}^2)^2+4 d_{12}^2 d_{23}^2\neq0$, we could immediate determine $A_{1111}$ and $A_{1113}$, hence each element of $\A$ is obtained.\\
(III.2.1.2) If $(d_{12}^2-d_{23}^2)^2+4 d_{12}^2 d_{23}^2=0$. Then we have $d_{12}=d_{23}=0$, which leads to a contradiction to $d_{12}^2+d_{23}^2\neq0$.\\
(III.2.2) If $d_{12}^2+d_{23}^2=0$, then $d_{12}=d_{23}=0$. Thus, both $\B$ and $\D$ are invariant under rotation transformations in 1-3 plane. For this reason, we can choose a proper coordinate system to make $C_{13}=0$, and therefore, $A_{1223}=0$. Next, from equations $H_{ij}=A_{ipq\ell}A_{jpqm}D^{(1)}_{\ell m}$, we obtain a quadratic equation set of $A_{1111}$ and $A_{1113}$.
\begin{equation}\label{eq1}
\left\{\begin{aligned}
&A_{1113}(-2 A_{1122}-A_{2222}) d_{11}-2 A_{1222} A_{2223} d_{11}=H_{13},\\
&4 A_{1111}^2 d_{11}+6 A_{1122} A_{1111} d_{11}+4 A_{1113}^2 d_{11}-2 A_{1222}^2
   d_{11}=H_{11},\\
&4 A_{1111}^2 d_{11}+(10 A_{1122}+2A_{2222})A_{1111} d_{11}+4
  A_{1113}^2 d_{11}+4 A_{1122}^2 d_{11}-2 A_{2222}^2 d_{11}-2 A_{2223}^2 d_{11}\\
&\qquad\qquad\qquad\qquad\qquad\qquad\qquad\qquad\qquad\qquad\qquad\qquad\qquad\qquad-2A_{1122} A_{2222} d_{11}=H_{33}.
\end{aligned}\right.
\end{equation}
(III.2.2.1) If $A_{2222}\neq-2A_{1122}$, from the first equation of \eqref{eq1}, we could directly determine $A_{1113}$. Then, we consider the element $A_{1111}$. From the second and the third equations of \eqref{eq1}, we know that
\begin{equation*}
\begin{aligned}
&2d_{11}(2A_{1122}+A_{2222})A_{1111}\\
&\qquad\qquad=H_{33}-H_{11}+2d_{11}(-2A_{1122}^2+A_{1122}A_{2222}-A_{1222}^2+A_{2222}^2+A_{2223}^2+A_{2223}^2).
\end{aligned}
\end{equation*}
Together with $d_{11}\neq0$, we could obtain $A_{1111}$. Next, $B_{11}-B_{33}=B_{13}=0$ leads to a linear system of $A_{1112}$ and $A_{1123}$ as:
\begin{equation*}
\left\{\begin{aligned}
&3 A_{1112}A_{1222}+3 A_{1123} A_{2223}+2 A_{1122}^2+2 A_{1111} A_{1122}+5 A_{2222} A_{1122}+A_{1222}^2+2 A_{2222}^2\\
&\qquad\qquad\qquad\qquad\qquad\qquad\qquad\qquad\qquad\qquad\qquad\qquad\qquad+2
   A_{2223}^2+A_{1111} A_{2222}=0,\\
&3 A_{1112} A_{2223}-3 A_{1123} A_{1222}-2 A_{1113} A_{1122}-A_{1113} A_{2222}+4
   A_{1222} A_{2223}=0.
\end{aligned}\right.
\end{equation*}
 We only need to consider the case that the determinant of this linear system is zero, which equals to $A_{1222}^2+A_{2223}^2=0$. Then, $A_{1222}=A_{2223}=0$. From $B_{11}-B_{33}=B_{13}=0$, we have
\begin{equation*}
\left\{\begin{aligned}
&\left(2 A_{1122}+A_{2222}\right) \left(A_{1111}+A_{1122}+2 A_{2222}\right)=0,\\
&A_{1113} \left(2 A_{1122}+A_{2222}\right)=0.
\end{aligned}\right.
\end{equation*}
Since $A_{2222}+2A_{1122}\neq0$, we have $A_{1113}=0$ and $A_{1111}=-A_{1122}-2A_{2222}$. From $B_{12}=B_{23}=0$, we know that
\begin{equation*}
\left\{\begin{aligned}
&A_{1112} \left(A_{1122}-A_{2222}\right)=0,\\
&A_{1123} \left(A_{1122}+2A_{2222}\right)=0.
\end{aligned}\right.
\end{equation*}
(III.2.2.1.1) If $A_{1112}=A_{1123}=0$, $\A$ has been determined.\\
(III.2.2.1.2) If $A_{1112}=0$ and $A_{1123}\neq0$, we have $A_{2222}=-\frac{1}{2}A_{1122}$. Then, from identity of $H_{13}$, we know that $$H_{13}=-\frac{9}{2}A_{1122}A_{1123}.$$
Since $A_{2222}+2A_{1122}\neq0$ and $A_{2222}=-\frac{1}{2}A_{1122}$, we find that $A_{1122}\neq0,$ hence $A_{1123}$ could be determined.\\
(III.2.2.1.3) If $A_{1112}\neq0$ and $A_{1123}=0$, we have $A_{2222}=A_{1122}$. Then, from identity of $H_{12}$, we get $$H_{12}=-9A_{1112}A_{1122}d_{11}.$$
For the same reason, $A_{1122}\neq0.$ so we could determine $A_{1112}$.\\
(III.2.2.1.4) If $A_{1112}\neq0$ and $A_{1123}\neq0$, we have $A_{1122}-A_{2222}=0$ and $A_{1122}+2A_{2222}=0$, then, $$A_{1122}=A_{2222}=0,$$ which draws a contradiction to $A_{2222}+2A_{1122}\neq0$.\\
(III.2.2.2) If $A_{2222}=-2A_{1122}$. The rest of the proof in this subcase is analogous to that of (II.2), hence, for simplicity, we omit the discussion here.

{\bf Third}, if both $\bf{D^{(1)}}$ and $\bf{D^{(2)}}$ are not zero-tensors, and they are not in direct proportion either.
It should be noted that, for a subcase in this part, if we finally determine all the elements (instead of the $\rm{SO_3}$-orbit) of $\A$, the proof will be quite similar to those in the {\bf Second} part, due to technically minor changes. For this reason, we only need to start from the subcases (II.2.2.1), (II.2.2.2) and (III.2.2.2) in the {\bf Second} part. To avoid confusion, we denote them as (II$^{*}$.2.2.1), (II$^{*}$.2.2.2) and (III$^{*}$.2.2.2) here respectively. Moreover, we suppose that $$\D^{(2)}:=(\hat{d}_{ij}),\ i,j=1,2,3.$$
\\
(II$^{*}$.2.2.1) Now we have $A_{1223}=A_{2223}=A_{1222}=A_{1112}=A_{1123}=0$ and $A_{2222}=-2A_{1122}$. Our goal is to determine $A_{1111}$ and $A_{1113}$. From $G_{ij}=A_{ijk\ell}D^{(2)}_{k\ell}$, we obtain a linear system of $A_{1111}$ and $A_{1113}$:
\begin{equation*}
\left\{\begin{aligned}
& A_{1111} \left(2\hat{d}_{11}+ \hat{d}_{22}\right)+2 A_{1113} \hat{d}_{13}+A_{1122} \hat{d}_{11}+2 A_{1122} \hat{d}_{22}=G_{11},\\
& A_{1113}(2\hat{d}_{11}+\hat{d}_{22})-2 A_{1111} \hat{d}_{13}-2A_{1122} \hat{d}_{13}=G_{13}.
\end{aligned}\right.
\end{equation*}
The determinant is $(2\hat{d}_{11}+\hat{d}_{22})^2+\hat{d}_{13}^2$.\\
(II$^{*}$.2.2.1.1) If $(2\hat{d}_{11}+\hat{d}_{22})^2+\hat{d}_{13}^2\neq0$, we could determine $A_{1111}$ and $A_{1113}$.\\
(II$^{*}$.2.2.1.2) If $(2\hat{d}_{11}+\hat{d}_{22})^2+\hat{d}_{13}^2=0$, we have $\hat{d}_{22}=-2\hat{d}_{11}$ and $\hat{d}_{13}=0$.
From $N_{ij}=A_{ijk\ell}D^{(2)}_{km}D^{(2)}_{\ell m}$, we further have that
\begin{equation*}
\left\{\begin{aligned}
& A_{1111} (\hat{d}_{12}^2-\hat{d}_{23}^2)+2 A_{1113} \hat{d}_{12} \hat{d}_{23}+3 A_{1122} \hat{d}_{11}^2+A_{1122} \hat{d}_{12}^2=N_{11},\\
& -2 A_{1111} \hat{d}_{23} \hat{d}_{12}+A_{1113} (\hat{d}_{12}^2-\hat{d}_{23}^2)-2 A_{1122} \hat{d}_{23} \hat{d}_{12}=N_{13}.
\end{aligned}\right.
\end{equation*}
If $(\hat{d}_{12}^2-\hat{d}_{23}^2)^2+\hat{d}_{12}^2 \hat{d}_{23}^2\neq0$, we could determine $A_{1111}$ and $A_{1113}$. If not, we have $\hat{d}_{12}=\hat{d}_{23}=0$.
Thus, $\D^{(1)}$ and $\D^{(2)}$ are in direct proportion, which leads to a contradiction to the precondition.\\
(II$^{*}$.2.2.2) Now we have $A_{1223}=A_{2223}=A_{1222}=A_{1113}=0$,  $A_{2222}=-2A_{1122}$ and $A_{1111}=-\frac{3}{4}A_{1122}$ and we need to determine $A_{1112}$ and $A_{1123}$. From $G_{ij}=A_{ijk\ell}D^{(2)}_{k\ell}$, we have a linear system of $A_{1112}$ and $A_{1123}$:
\begin{equation*}
\left\{\begin{aligned}
& A_{1112}(2\hat{d}_{11}+\hat{d}_{22})+2 A_{1123} \hat{d}_{13}+2 A_{1122} \hat{d}_{12}=G_{12},\\
& -2 A_{1112} \hat{d}_{13}+A_{1123} (2\hat{d}_{11}+\hat{d}_{22})+2 A_{1122} \hat{d}_{23}=G_{23}.
\end{aligned}\right.
\end{equation*}
The determinant is $(2\hat{d}_{11}+\hat{d}_{22})^2+\hat{d}_{13}^2$.\\
(II$^{*}$.2.2.2.1) If $(2\hat{d}_{11}+\hat{d}_{22})^2+\hat{d}_{13}^2\neq0$, we could determine $A_{1112}$ and $A_{1123}$.\\
(II$^{*}$.2.2.2.2) If $(2\hat{d}_{11}+\hat{d}_{22})^2+\hat{d}_{13}^2=0$, we have $\hat{d}_{13}=0$ and $\hat{d}_{22}=-2\hat{d}_{11}$. Furthermore, from $N_{ij}=A_{ijk\ell}D^{(2)}_{km}D^{(2)}_{\ell m}$, we know that
\begin{equation*}
\left\{\begin{aligned}
&  A_{1112} (\hat{d}_{12}^2-\hat{d}_{23}^2)+2 A_{1123} \hat{d}_{23} \hat{d}_{12}-2 A_{1122} \hat{d}_{11} \hat{d}_{12}=N_{12},\\
& -2 A_{1112} \hat{d}_{23} \hat{d}_{12}+A_{1123} (\hat{d}_{12}^2-\hat{d}_{23}^2)-2 A_{1122} \hat{d}_{11} \hat{d}_{23} =N_{23}.
\end{aligned}\right.
\end{equation*}
This is a linear system of $A_{1112}$ and $A_{1123}$ and we also consider its determinant. If $(\hat{d}_{12}^2-\hat{d}_{23}^2)^2+\hat{d}_{12}^2\hat{d}_{23}^2\neq0$, we could determine $A_{1112}$ and $A_{1123}$. If not, we have $\hat{d}_{12}=\hat{d}_{23}=0$.
Thus, $\D^{(1)}$ and $\D^{(2)}$ are in direct proportion, which also leads to a contradiction to the precondition.\\
(III$^{*}$.2.2.2) As what we have mentioned in (III.2.2.2), the proof of this subcase can be mimicked by the previous parts, therefore we omit the discussion here.

In sum, we establish the following theorem.
\begin{Theorem}\label{Main_Thm}
The $\rm{SO_3}$-orbit of the elasticity tensor $\E$ can be determined by 11 second-order symmetrical tensors $\D^{(1)},\ \D^{(2)},\ \B,\ \C,\ \D,\ \F,\ \G,\ \HH,\ \K,\ \M,\ \N$ and 11 scalars $\lambda,\ \mu,\ J_{2},\ldots,J_{10}$.
\end{Theorem}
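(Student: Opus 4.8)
The plan is to reduce the whole statement to recovering the $\rm{SO_3}$-orbit of the fourth-order irreducible part $\A$. From the harmonic decomposition \eqref{decom}, the scalars $\lambda,\mu$ and the second-order irreducible tensors $\D^{(1)},\D^{(2)}$ are blocks of $\E$; together with the nine tensors $\B,\C,\D,\F,\G,\HH,\K,\M,\N$ of \eqref{INTE}, which are second-order symmetric and built from $\A,\D^{(1)},\D^{(2)}$, the functional basis \eqref{Sec_Ten} determines the joint $\rm{SO_3}$-orbit of all eleven tensors. Since $\A$ is the only factor of order greater than two, it suffices to show that this joint orbit, supplemented by the scalars $J_2,\dots,J_{10}$ that handle the wholly degenerate configuration $\D^{(1)}=\D^{(2)}=\mathbf{0}$, pins down the orbit of $\A$.

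Following Smith's approach, I would choose a frame that diagonalizes $\B=A_{ik\ell m}A_{jk\ell m}$, writing $\B=\mathrm{diag}(B_{11},B_{22},B_{33})$, and then read the defining relations $C_{ij}=A_{ijk\ell}B_{k\ell}$, $D_{ij}=A_{ijk\ell}B_{km}B_{\ell m}$, $F_{ij}=A_{ijk\ell}D^{(1)}_{k\ell}$, $M_{ij}=A_{ijk\ell}D^{(1)}_{km}D^{(1)}_{\ell m}$, and their counterparts with $\D^{(2)}$, as linear (or, for $\HH,\K$, quadratic) systems in the nine independent components of $\A$. The argument then splits according to the multiplicity pattern of the eigenvalues of $\B$ — all distinct, all equal, exactly two equal — crossed with the three configurations of $\D^{(1)},\D^{(2)}$ flagged in the {\bf First}, {\bf Second} and {\bf Third} parts (both zero; nonzero and proportional; nonzero and independent). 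In each generic subcase the relevant coefficient determinant factors into a product such as $-(B_{11}-B_{22})(B_{11}-B_{33})(B_{22}-B_{33})$ or $(2d_1+d_2)(d_1+2d_2)(d_1-d_2)$, which is nonzero under the standing assumption, so every component of $\A$ is recovered outright.

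The interesting subcases are those in which a coefficient determinant vanishes. Here a residual rotational freedom about one coordinate axis survives, so individual components of $\A$ cannot be fixed; instead I would align the frame so that $\B$, $\D^{(1)}$ and the other diagonal intermediate tensors remain invariant under that residual rotation, and then invoke Propositions \ref{pro1} and \ref{pro2}. These show that, after using the residual rotation in the $1$--$3$ plane, the orbit of $\A$ is completely determined by the single scalar $(A_{1111}+\tfrac{3}{4}A_{1122})^2+A_{1113}^2$ or $A_{1112}^2+A_{1123}^2$, whose value is in turn forced by the already-known entries of $\B$; this closes those branches without recovering $\A$ componentwise.

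The main obstacle is the bookkeeping: one must verify that in every degenerate branch, after exhausting the available residual symmetries, the vanishing of a further determinant either delivers a reduction to the hypotheses of Proposition \ref{pro1} or \ref{pro2}, or forces a relation (for instance $2A_{1122}+A_{2222}=0$ together with $A_{1222}=A_{2223}=0$) that contradicts the standing subcase assumption. Organizing this exhaustive case tree, and checking at each leaf that the relations extracted from $B_{12}=B_{13}=B_{23}=0$ and from the $\HH,\K,\M,\N$ identities are consistent and either solvable or contradictory, is where essentially all the work lies; the algebra within any single branch is elementary once the correct determinant has been identified.
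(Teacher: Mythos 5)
Your proposal follows essentially the same route as the paper's own proof: reduce to determining the orbit of the fourth-order part $\A$, diagonalize $\B$ and branch on the multiplicity pattern of its eigenvalues crossed with the three configurations of $\D^{(1)},\D^{(2)}$, solve the linear systems coming from $\C,\D,\F,\G,\HH,\K,\M,\N$ when the coefficient determinants are nonzero, and close the degenerate branches either by the residual $1$--$3$ plane rotation via Propositions \ref{pro1} and \ref{pro2} or by deriving a contradiction with the standing subcase assumption. The only thing you leave implicit is the exhaustive execution of the case tree itself, which is exactly where the paper spends its effort, but your strategy and all its key ingredients match.
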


\section{A Polynomially irreducible functional basis of elasticity tensors}\label{func-basis}
Based on Theorem \ref{Main_Thm} and \eqref{Sec_Ten}, we totally obtain 429 isotropic invariants. Notice that the elements of  intermediate tensors $\D^{(1)},\ \D^{(2)},\ \B,\ \C,\ \D,\ \F,\ \G,\ \HH,\ \K,\ \M,\ \N$ are related, there should be many polynomial relations among the invariants and their products. We utilize {\sl LinearSolve} function in Mathematica to seek all the polynomial relations and eliminate those invariants which are zeros or polynomials in others (refer to the supporting material for details). Finally there remains  251 isotropic invariants and they form a polynomially irreducible functional basis of $\E$. However, it can be quite tough finding the hidden functional relations. At least, we can draw a conclusion as following.
%From the proof in Section \ref{Recov}, we can construct
%\begin{equation}\label{INTE}
%\left\{\begin{aligned}
%  &D^{(1)}_{ij},\qquad D^{(2)}_{ij},\qquad B_{ij}:=A_{ik\ell m}A_{jk\ell m}, \qquad
%  C_{ij}:=A_{ijk\ell}B_{k\ell}, \qquad
%  D_{ij}:=A_{ijk\ell}B_{km}B_{\ell m}, \\
%  %E_{ij}=A_{ijk\ell}C_{kl}, \\ \qquad
%  &F_{ij}:=A_{ijk\ell}D^{(1)}_{k\ell}, \qquad
%  %T_{ij}=A_{ipq\ell}B_{jp}D^1_{q\ell}, \qquad
%  H_{ij}:=A_{ipq\ell}A_{jpqm}D^{(1)}_{\ell m}, \qquad
%  M_{ij}:=A_{ijk\ell}D^{(1)}_{km}D^{(1)}_{\ell m},\\
%    &G_{ij}:=A_{ijk\ell}D^{(2)}_{k\ell}, \qquad
%  %T_{ij}=A_{ipq\ell}B_{jp}D^1_{q\ell}, \qquad
%  K_{ij}:=A_{ipq\ell}A_{jpqm}D^{(2)}_{\ell m}, \qquad
%  N_{ij}:=A_{ijk\ell}D^{(2)}_{km}D^{(2)}_{\ell m}.\\
%  &\lambda,\ \mu,\ J_2,\ldots,J_{10}.
%  \end{aligned}\right.
%\end{equation}
%to recover an elasticity tensor $\E$. In order words, a functional basis of the above tensors also forms a functional basis of $\E$. As a result of Section \ref{2.2},
\begin{Theorem}%\label{theo}
A set of 251 isotropic invariants forms a polynomially irreducible functional basis for elasticity tensor $\E$. These invariants are presented in Table \ref{IsoInv}.
\end{Theorem}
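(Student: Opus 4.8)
The plan is to split the statement into two independent claims: first, that the $251$ invariants separate the $\rm{SO_3}$-orbits of $\E$ (the functional-basis property), and second, that none of them is a polynomial in the remaining $250$ (polynomial irreducibility). The two claims are proved by quite different means — the first is conceptual and rests on Theorem \ref{Main_Thm}, while the second is the computational heart of the section.

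For the functional-basis property I would start from Theorem \ref{Main_Thm}, which asserts that the $\rm{SO_3}$-orbit of $\E$ is determined by the eleven second-order symmetric tensors $\D^{(1)},\ \D^{(2)},\ \B,\ \C,\ \D,\ \F,\ \G,\ \HH,\ \K,\ \M,\ \N$ together with the eleven scalars $\lambda,\ \mu,\ J_2,\dots,J_{10}$. Feeding the eleven tensors into Zheng's list \eqref{Sec_Ten} produces the $429$ isotropic invariants described just before the theorem, and by Theorem \ref{Main_Thm} together with the orbit-separation definition of a functional basis, these $429$ invariants already separate orbits. The key observation is that deleting an invariant that is a polynomial in the others cannot destroy separation: if $f=P(g_1,\dots,g_k)$ and $g_i(\E_1)=g_i(\E_2)$ for all $i$, then automatically $f(\E_1)=f(\E_2)$, so the reduced set imposes exactly the same constraints on any pair $\E_1,\E_2$. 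Hence, after discarding the $429-251=178$ invariants that vanish identically or are polynomial in the others, the surviving $251$ invariants still form a functional basis.

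For polynomial irreducibility I would organize the verification degree by degree, exactly as outlined in the Introduction. Fixing a candidate invariant $f$ of degree $d$, I would enumerate every monomial of degree $d$ obtainable as a product of strictly lower-degree invariants from the list (the joints), and adjoin all the remaining degree-$d$ invariants; these form the columns of a matrix $A$, while the evaluations of $f$ supply the right-hand side $b$. Assigning $m\ge n$ independent groups of nineteen rational numbers to the $19(=9+5+5)$ free components of $\A,\ \D^{(1)},\ \D^{(2)}$ yields an over-determined system $Ax=b$, and $f$ is a polynomial in the others precisely when this system is consistent. Running \textsl{LinearSolve} on each $f$ simultaneously performs the reduction (a consistent system certifies a relation, so $f$ is eliminated) and, for the final $251$, certifies irreducibility (an inconsistent system proves that no such relation exists).

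The hard part is the completeness and correctness of this computation rather than any single estimate. Two points need care. First, the candidate columns must exhaust every degree-$d$ monomial in the generators, since omitting even one admissible joint could render an existing relation invisible and yield a false claim of irreducibility; I would therefore argue that the enumeration at each degree is genuinely exhaustive. Second, the random-evaluation certificate must be justified: a true polynomial identity holds on all of tensor space and hence at every rational sample, so an inconsistent system rigorously excludes a relation, while conversely a relation satisfied at $m$ sufficiently generic rational points persists identically because two polynomials agreeing on a Zariski-dense set coincide. Provided $m$ exceeds the column count $n$ and the sample points are generic, these two directions turn the \textsl{LinearSolve} output into a rigorous certificate, and what remains is the large but routine bookkeeping recorded in the supporting material and summarized in Table \ref{IsoInv}.
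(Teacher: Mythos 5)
Your proposal follows essentially the same route as the paper: the functional-basis property is obtained from Theorem \ref{Main_Thm} together with Zheng's list \eqref{Sec_Ten} (giving the 429 invariants) plus the observation that discarding polynomially dependent invariants preserves orbit separation, and the reduction to 251 with the irreducibility certificate is carried out by the same \textsl{LinearSolve} evaluation-at-rational-points scheme that the paper delegates to its supporting material. Your added remarks on why inconsistency rigorously excludes a relation and why the joint enumeration must be exhaustive are sound and, if anything, make the logical structure more explicit than the paper's own one-paragraph argument.
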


%\begin{table}[htbp]
{\small
\begin{longtable}{|c|l|c|}
\caption{A polynomially irreducible functional basis of piezoelectric tensors.}\label{IsoInv}\\
%\centering
  \hline
  Degree & Invariants & Number \\
  \hline
  1&$\lambda,\quad\mu,$&2\\
  \hline
  2&$J_2:=A_{ijkl}A_{ijkl},\quad\tr(\D^{(1)})^2,\quad\tr(\D^{(2)})^2,\quad\tr\D^{(1)}\D^{(2)},$&4\\
  \hline
  3&$J_3:=P_{ijkl}A_{ijkl},\quad\tr\HH,\quad\tr\K,\quad\tr(\D^{(1)})^3,\quad\tr(\D^{(2)})^3,\quad\tr\D^{(1)}\F,$&10\\
    &$\tr\D^{(1)}\G,\quad\tr\D^{(2)}\G,\quad\tr(\D^{(1)})^2\D^{(2)},\quad\tr\D^{(1)}(\D^{(2)})^2$,&\\
  \hline
  4&$J_{4}:=B_{ij}B_{ij},\quad\tr\F^2,\quad\tr\G^2,\quad\tr\D^{(1)}\C,\quad\tr\D^{(1)}\HH,\quad\tr\D^{(1)}\K,\quad\tr\D^{(1)}\M,$&16\\
  &$\tr\D^{(1)}\N,\quad\tr\D^{(2)}\C,\quad\tr\D^{(2)}\K,\quad\tr\D^{(2)}\M,\quad\tr\D^{(2)}\N,\quad\tr\F\G,\quad$&\\
 &$\tr(\D^{(1)})^2(\D^{(2)})^2,\quad\tr\D^{(1)}\D^{(2)}\F,\quad\tr\D^{(1)}\D^{(2)}\G,\quad$&\\
\hline
5&$J_5:=B_{ij}A_{ijkl}B_{kl},\quad\tr\B\HH,\quad\tr\B\K,\quad\tr\B\M,\quad\tr\B\N,\quad\tr\F\C,\quad\tr\F\HH,\quad$&29\\
&$\tr\F\K,\quad\tr\F\M,\quad\tr\F\N,\quad\tr\G\C,\quad\tr\G\HH,\quad\tr\G\K,\quad\tr\G\M,\quad\tr\G\N,\quad$&\\
&$\tr(\D^{(1)})^2\K,\quad\tr(\D^{(1)})^2\M,\quad\tr(\D^{(1)})^2\N,\quad\tr(\D^{(2)})^2\HH,\quad\tr(\D^{(2)})^2\N,\quad$&\\
&$\tr\D^{(1)}\F^2,\quad\tr\D^{(1)}\G^2,\quad\tr\D^{(2)}\F^2,\quad\tr\D^{(2)}\G^2,\quad\tr\D^{(1)}\D^{(2)}\C,\quad$&\\
&$\tr\D^{(1)}\D^{(2)}\HH,\quad\tr\D^{(1)}\D^{(2)}\K,\quad\tr\D^{(1)}\D^{(2)}\M,\quad\tr\D^{(1)}\D^{(2)}\N,\quad$&\\
 \hline
 6&$J_6:=B_{ij}P_{ijkl}B_{kl},\quad\tr\HH^2,\quad\tr\K^2,\quad\tr\M^2,\quad\tr\N^2,\quad\tr\F^3,\quad\tr\G^3,\quad$&46\\
 &$\tr\D^{(1)}\D,\quad\tr\D^{(2)}\D,\quad\tr\C\HH,\quad \tr\C\K,\quad\tr\C\M,\quad\tr\C\N,\quad\tr\HH\K,\quad$  &\\
 &$\tr\HH\M,\quad\tr\HH\N,\quad\tr\K\M,\quad\tr\K\N,\quad\tr\M\N,\quad\tr\F^2\G,\quad\tr\B\F^2,\quad\tr\B\G^2,\quad$&\\
 &$\tr\F\G^2,\quad\tr(\D^{(1)})^2\F^2,\quad\tr(\D^{(1)})^2\G^2,\quad\tr(\D^{(2)})^2\F^2,\quad\tr(\D^{(2)})^2\G^2,\quad$&\\
 &$\tr\D^{(1)}\B\K,\quad\tr\D^{(1)}\B\M,\quad\tr\D^{(1)}\B\N,\quad\tr\D^{(1)}\F\K,\quad\tr\D^{(1)}\F\N,\quad$&\\
 &$\tr\D^{(1)}\G\C,\quad\tr\D^{(1)}\G\HH,\quad\tr\D^{(1)}\G\K,\quad\tr\D^{(1)}\G\M,\quad\tr\D^{(1)}\G\N,\quad$&\\
 &$\tr\D^{(2)}\B\HH,\quad\tr\D^{(2)}\B\M,\quad\tr\D^{(2)}\B\N,\quad\tr\D^{(2)}\F\HH,\quad\tr\D^{(2)}\F\K,\quad$&\\
 &$\tr\D^{(2)}\F\M,\quad\tr\D^{(2)}\F\N,\quad\tr\D^{(2)}\G\HH,\quad\tr\D^{(2)}\G\M,$&\\
 \hline
 7&$J_7:=B_{ij}^2A_{ijkl}B_{kl},\quad\tr\F\D,\quad\tr\G\D,\quad\tr(\D^{(1)})^2\D,\quad\tr(\D^{(2)})^2\D,\quad$&54\\
 &$\tr\F^2\C,\quad\tr\F^2\HH,\quad\tr\F^2\K,\quad\tr\F^2\M,\quad\tr\F^2\N,\quad\tr\G^2\C,\quad\tr\G^2\HH,\quad$&\\
 &$\tr\G^2\K,\quad\tr\G^2\M,\quad\tr\G^2\N,\quad\tr\D^{(1)}\HH^2,\quad\tr\D^{(1)}\K^2,\quad\tr\D^{(1)}\M^2,\quad$&\\
 &$\tr\D^{(1)}\N^2,\quad\tr\D^{(2)}\HH^2,\quad\tr\D^{(2)}\K^2,\quad\tr\D^{(2)}\M^2,\quad\tr\D^{(2)}\N^2,\quad\tr\D^{(1)}\D^{(2)}\D,$&\\
 &$\tr\D^{(1)}\C\HH,\quad\tr\D^{(1)}\C\K,\quad\tr\D^{(1)}\C\M,\quad\tr\D^{(1)}\C\N,\quad\tr\D^{(1)}\HH\K,\quad$&\\
 &$\tr\D^{(1)}\HH\M,\quad\tr\D^{(1)}\HH\N,\quad\tr\D^{(1)}\K\M,\quad\tr\D^{(1)}\K\N,\quad\tr\D^{(1)}\M\N,\quad$&\\
 &$\tr\D^{(2)}\C\HH,\quad\tr\D^{(2)}\C\K,\quad\tr\D^{(2)}\C\M,\quad\tr\D^{(2)}\C\N,\quad\tr\D^{(2)}\HH\K,\quad$&\\
 &$\tr\D^{(2)}\HH\M,\quad\tr\D^{(2)}\HH\N,\quad\tr\D^{(2)}\K\M,\quad\tr\D^{(2)}\K\N,\quad\tr\D^{(2)}\M\N,\quad$&\\
 &$\tr\B\F\C,\quad\tr\B\F\K,\quad\tr\B\F\N,\quad\tr\B\G\C,\quad\tr\B\G\HH,\quad\tr\B\G\M,\quad$&\\
 &$\tr\F\G\HH,\quad\tr\F\G\K,\quad\tr\F\G\M,\quad\tr\F\G\N,\quad$&\\
 \hline
 8&$J_{8}:=B_{ij}^2P_{ijkl}B_{kl},\quad\tr\HH\D,\quad\tr\K\D,\quad\tr\M\D,\quad\tr\N\D,\quad\tr\B\HH^2,\quad\tr\B\K^2,\quad$&49\\
 &$\tr\B\M^2,\quad\tr\B\N^2,\quad\tr\F\C^2,\quad\tr\F\HH^2,\quad\tr\F\K^2,\quad\tr\F\M^2,\quad\tr\F\N^2,\quad\tr\G\C^2,$&\\
 &$\tr\G\HH^2,\quad\tr\G\K^2,\quad\tr\G\M^2,\quad\tr\G\N^2,\quad\tr\B^2\F^2,\quad\tr\B^2\G^2,\quad\tr\F^2\G^2,\quad$&\\
 &$\tr(\D^{(1)})^2\HH^2,\quad\tr(\D^{(1)})^2\K^2,\quad\tr(\D^{(1)})^2\N^2,\quad\tr(\D^{(2)})^2\HH^2,\quad\tr(\D^{(2)})^2\K^2,\quad$&\\
 &$\tr(\D^{(2)})^2\M^2,\quad\tr\D^{(1)}\G\D,\quad\tr\D^{(2)}\F\D,\quad\tr\B\HH\K,\quad$&\\
 &$\tr\B\HH\M,\quad\tr\B\HH\N,\quad\tr\B\K\M,\quad\tr\B\K\N,\quad\tr\B\M\N,\quad\tr\F\C\K,\quad\tr\F\C\N,\quad$&\\
 &$\tr\F\HH\K,\quad\tr\F\HH\N,\quad\tr\F\K\M,\quad\tr\F\K\N,\quad\tr\F\M\N,\quad\tr\G\C\HH,\quad\tr\G\C\M,\quad$&\\
 &$\tr\G\HH\K,\quad\tr\G\HH\M,\quad\tr\G\HH\N,\quad\tr\G\M\N,\quad$&\\
\hline
 9&$J_9:=B_{ij}^2A_{ijkl}B_{kl}^2,\quad\tr\HH^3,\quad\tr\K^3,\quad\tr\F^2\D,\quad\tr\G^2\D,\quad\tr\C^2\HH,\quad\tr\C^2\K,\quad$&29\\
 &$\tr\C^2\M,\quad\tr\C^2\N,\quad\tr\HH^2\K,\quad\tr\HH^2\N,\quad\tr\K^2\M,\quad\tr\M^2\N,\quad\tr\HH\K^2,\quad$&\\
 &$\tr\HH\N^2,\quad\tr\K\M^2,\quad\tr\M\N^2,\quad\tr\D^{(1)}\C\D,\quad\tr\D^{(1)}\K\D,\quad\tr\D^{(1)}\N\D,\quad$&\\
 &$\tr\D^{(2)}\C\D,\quad\tr\D^{(2)}\HH\D,\quad\tr\D^{(2)}\M\D,\quad\tr\F\G\D,\quad\tr\C\M\N,\quad\tr\HH\K\M,\quad$&\\
 &$\tr\HH\K\N,\quad\tr\HH\M\N,\quad\tr\K\M\N,\quad$&\\
 \hline
 10&$J_{10}:=B_{ij}^2P_{ijkl}B_{kl}^2,\quad\tr\B^2\HH^2,\quad\tr\B^2\K^2,\quad\tr\F^2\K^2,\quad$&10\\
 &$\tr\B\HH\K,\quad\tr\B\K\D,\quad\tr\F\K\D,\quad\tr\F\N\D,\quad\tr\G\HH\D,\quad\tr\G\M\D,$&\\
 \hline
 11&$\tr\D^{(1)}\D^2,\quad\tr\D^{(2)}\D^2.$&2\\
 \hline

 Total&&251\\
 \hline

  \end{longtable}

Due to the essential difference between our method and that of Olive, Kolev and Auffray (in algebraic viewpoint), our functional basis is not a proper subset of the Olive-Kolve-Auffray minimal integrity basis even if we need less isotropic invariants in total. For instance, in degrees 9 and 10, we have 29 and 10 isotropic invariants while the results of Olive, Kolev and Auffray are 21 and 7 respectively. In particular, to make an explicit comparison, we list the numbers of isotropic invariants in each degree of our polynomially irreducible functional basis (PIFB) and Olive-Kolev-Auffray minimal integrity basis (MIB) in Table \ref{Num}.
\begin{table}[htbp]
\caption{Numbers of isotropic invariants in different degrees.}\label{Num}
\centering
\begin{tabular}{|c|c|c||c|c|c|}
\hline
Degree & PIFB & MIB  &Degree & PIFB & MIB  \\
\hline
1 & 2 & 2 & 7 & 54 & 76\\\hline
2 & 4 & 4 & 8 & 49 & 66 \\\hline
3 & 10 & 10 & 9 & 29 & 21\\\hline
4 & 16 & 16 & 10 & 10 & 7\\\hline
5 & 29 & 33 & 11 & 2 & 5\\\hline
6 & 46 & 57 & Total & 251 & 297\\\hline

\end{tabular}
\end{table}

\section{Conclusions}%\label{Conclu}
We extend Smith's approach and propose a possible methodology for constructing functional basis of high order tensor by designing a series of intermediate tensors with orders no greater than two to determine the orbit of the considered tensor. After a careful discussion and examination, we finally obtain an polynomially irreducible functional basis consisting of 251 isotropic invariants for elasticity tensors. Moreover, we should note that our functional basis is not necessary a minimal functional basis, but it provides a smaller upper bound for this problem.

%\end{landscape}
\end{document}